\newcommand{\bcodd}{\ensuremath{\operatorname{bc}'}}
\newcommand{\bc}{\ensuremath{\operatorname{bc}}}
\newcommand{\bccirc}{\ensuremath{\operatorname{bc}}^\circ}
\newcommand{\genus}{\ensuremath{\operatorname{g}}}
\newcommand{\NP}{{\cal NP}\xspace}
\newcommand{\gcr}{\ensuremath{\operatorname{gcr}}\xspace}
\newcommand{\dcr}{\ensuremath{\operatorname{dcr}}\xspace}
\newcommand{\Oh}{{\ensuremath{\mathcal{O}}}}
\let\doendproof\endproof
\renewcommand\endproof{~\hfill\qed\doendproof}
\begin{document}

\title{The Bundled Crossing Number}

\author{Md.\ Jawaherul Alam\inst1 \and Martin Fink\inst2 \and 
Sergey Pupyrev\inst{3,4}}

\institute{
Department of Computer Science, University of California, Irvine
\and
Department of Computer Science,
University of California, Santa Barbara
\and
Department of Computer Science,
University of Arizona, Tucson
\and
Institute of Mathematics and Computer Science,
Ural Federal University
}

\maketitle
\begin{abstract}

We study the algorithmic aspect of edge bundling. A bundled crossing in a drawing of a graph
is a group of crossings between two sets of parallel edges.
The bundled crossing number is the minimum number of
bundled crossings that group all crossings in a drawing of the graph.

We show that the bundled crossing number is closely related to the
orientable genus of the graph. If multiple crossings and
self-intersections of edges are allowed, the two values are
identical; otherwise, the bundled crossing number can be higher than
the genus. 

We then investigate the problem of minimizing the number of bundled crossings.
For circular graph layouts with a fixed order of vertices, we present
a constant-factor approximation algorithm. When the circular order is not prescribed,
we get a $\frac{6c}{c-2}$-approximation for a graph with $n$ vertices having at least $cn$ edges for $c>2$.
For general graph layouts, we develop an algorithm with an approximation
factor of $\frac{6c}{c-3}$ for graphs with at least $cn$ edges for $c > 3$.
\end{abstract}

\section{Introduction}
\label{sec:intro}

For many real-world networks with substantial numbers of links between
objects, traditional graph drawing algorithms produce visually cluttered
and confusing drawings. Reducing the number of
edge crossings is one way to improve the quality of the drawings. However,
minimizing the number of crossings is very
 difficult~\cite{Chuzhoy11,BCGJM13},
and a large number of crossings is sometimes unavoidable. Another
way to alleviate this problem is to employ the
edge bundling technique in which some edge segments running close to each other are
collapsed into bundles to reduce the clutter~\cite{Hol06,CZQWL08,HW09,LBA10,GHNS11,EHPCT11,PNBH11}.
While these methods produce simplified drawings of graphs and
significantly reduce visual clutter, they are typically heuristics and
provide no guarantee on the quality of the result.

We study the algorithmic aspect of edge bundling, which is listed as one of the
open questions in a recent survey on crossing minimization by Schaefer~\cite{schaefer2013graph}.
Our goal is to formalize the underlying geometric problem and design efficient algorithms
with provable theoretical guarantees. In our model, \emph{pairwise} edge crossings
are merged into bundles of crossings, reducing the number of \emph{bundled crossings}, 
where a bundled crossing is the intersection of two groups of edges; see Fig.~\ref{fig:ex1}.
We consider both the general setting, where 
multiple crossings and self-intersections of the edges are allowed, and the more natural restricted setting
in which only simple drawings are allowed.

\begin{figure}[t]
\centering
		\includegraphics[page=1, scale=0.7]{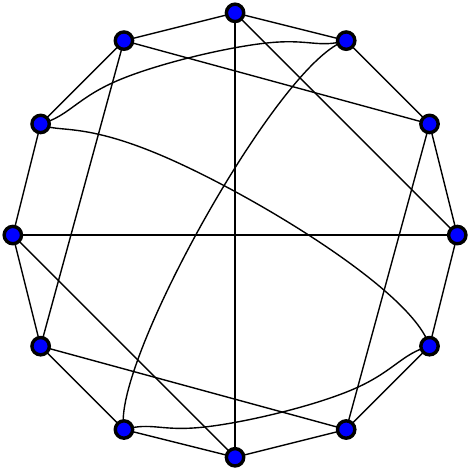}
		\hspace{0.15\textwidth}
		\includegraphics[page=2, scale=0.7]{images/ex1}\\
		(a)\hspace{0.45\textwidth}(b)
\caption{Circular layout of the Chv\'{a}tal graph: (a)~$28$ pairwise edge crossings,
(b)~$13$~bundled crossings.}
\label{fig:ex1}
\end{figure}

\subsection{Our Contribution}
We first prove that in the most general setting (when a pair of edges is allowed to
cross multiple times and an edge may be crossed by itself or by an incident edge)
the bundled crossing number coincides with the orientable genus of the
graph (Section~\ref{sec:prelim});
thus, computing it exactly is \NP-hard~\cite{Tho89}.
In the more natural setting restricted to simple drawings---without
double- and self-crossings---, the
bundled crossing number of some graphs is strictly greater than the genus.


Next, we consider the \emph{circular bundled crossing number} (Section~\ref{sec:circ}),
that is, the minimum number of bundled crossings that can be achieved
in a circular graph layout. For a fixed
circular order of vertices, we present a $16$-approximation algorithm and
a fixed-parameter algorithm with respect to the number of bundled crossings.
For circular layouts without a given vertex order, we develop an
algorithm with the approximation factor $\frac{6c}{c-2}$ for graphs with $n$ vertices having
at least $cn$ edges for $c>2$.

In Section~\ref{sec:general}, we study the \emph{bundled crossing number} for 
general drawings. The algorithm for circular layouts can also be applied for
this setting; we show that it guarantees the approximation factor 
$\frac{6c}{c-3}$ for graphs with at least $cn$ edges for~$c > 3$. We then
suggest an alternative algorithm that produces fewer bundled crossings
for graphs with a large planar subgraph.

Finally, by extending our analysis for circular layouts, we resolve one of
the open problems stated by Fink et
al.~\cite{DBLP:journals/jgaa/FinkPW15} for an ordering problem of
paths on a graph arising in visualizing metro maps (Section~\ref{sec:metro}).

\subsection{Related Work}

\paragraph{Edge crossings.}
Crossing minimization is a rich topic in graph drawing~\cite{BCGJM13} but still
poorly understood from the algorithmic point of view. The best currently known algorithm
implies an $\Oh(n^{9/10})$-approximation for the minimum crossing
number on graphs having bounded maximum degree~\cite{Chuzhoy11}. In contrast, the problem
is \NP-hard even for cubic graphs and a hardness of constant-factor
approximation is known~\cite{cabello2013hardness}. Minimizing crossings in circular layouts is also \NP-hard, and
several heuristics have been proposed~\cite{BB04,GK06}. For graphs with $m \ge 4n$,
an $\Oh(\log^2 n)$-approximation algorithm exists~\cite{SSSV95}.
Our algorithm guarantees an $\Oh(1)$-approximation for bundled crossings
under that condition.

Bundled crossings are closely related to the model of \emph{degenerate crossings}
in which multiple edge crossings at the same point in the plane are counted as a
single crossing if all pairs of edges passing through the point intersect.
An unrestricted variant, called the 
\emph{genus crossing number} ($\gcr(G)$), allows for self-crossings of edges and
multiple crossings between pairs of edges.
Mohar showed that the genus crossing number equals the
\emph{non-orientable genus} of a graph~\cite{mohar2009genus}; thus,
$\gcr(G) = \Oh(m)$. This is similar to our result
that the bundled crossing number in this unrestricted setting equals
the \emph{orientable genus} of the graph.
If self-crossings are not allowed, then we obtain the 
\emph{degenerate crossing number} ($\dcr(G)$)~\cite{pach2009degenerate,SS15}.
It was conjectured by Mohar~\cite{mohar2009genus} that the genus
crossing number always equals the degenerate crossing number; 
Schaefer and {\v{S}}tefankovi{\v{c}} show that $\dcr(G) \le 6 \cdot \gcr(G) = \Oh(m)$.
A further restriction of the problem forbids multiple crossings between
a pair of edges. The corresponding \emph{simple degenerate crossing number}
is $\Omega(m^3/n^2)$ for graphs with $m \ge 4n$ edges~\cite{AP13}.
Thus multiple crossings between pairs of edges are significant for
the corresponding value of the crossing number.
Notice the difference to the bundled crossing
number, which is always $\Oh(m)$, even when no self- and multiple
crossings are allowed.

Recently, Fink et al.~\cite{otherpaper} introduced the bundled
crossing number. However, they only study the bundled crossing
number of a given embedding and show that determining the
number is \NP-hard. They also present a heuristic that in
some cases, e.g., in circular layouts, yields a constant-factor
approximation. In contrast, we study the variable-embedding setting: minimize the bundled crossing
number over all embeddings of a graph, which is posed as an open problem in~\cite{otherpaper}.

\paragraph{Edge bundling.}

Improving the quality of layouts via edge bundling is related to the idea of
confluent drawings, when a
non-planar graph is presented in a planar way by merging groups of
edges~\cite{DEGM05,EHLNSV13}.
The first discussion of bundled edges in the graph
drawing literature appeared in~\cite{GK06}, where the
authors improve circular layouts by routing edges either on the outer
or on the inner face of a circle. The hierarchical approach by Holten~\cite{Hol06}
bundles the edges based on an additional tree structure, and the
method is also applied for circular layouts.
Similar to~\cite{GK06,Hol06,EHLNSV13}, we study circular graph layouts.
Edge bundling methods for general graph layouts are suggested
in~\cite{CZQWL08,GHNS11,HW09,LBA10,EHPCT11}.
While these methods create an overview drawing, they allow the edges within a
bundle to cross and overlap each other arbitrarily, making individual edges
hard to follow. The issue is addressed
in~\cite{PNBH11,BS15}, where
the edges within a bundle are drawn parallel, as lines in metro maps.
To the best of our knowledge, none of the above works on edge
bundling provides a guarantee on the quality of the result, though they
can be applied in conjunction with our algorithms to provide a better visualization.

\paragraph{Metro maps.}

Crossing minimization has also been studied in the context of visualizing metro maps.
There, a planar graph (the metro network) and a set of paths in the graph (metro lines) are given.
The goal is to order the paths along the edges of the graph so as to minimize the number of crossings. 
Fink, Pupyrev, and Wolff~\cite{DBLP:journals/jgaa/FinkPW15} suggest to merge
single line crossings into crossings of blocks of lines minimizing
the number of \emph{block crossings} in the map. They devise approximation algorithms
for several classes of simple underlying networks (paths, upward trees) and
an asymptotically worst-case optimal algorithm for general networks. While we use some
ideas of~\cite{DBLP:journals/jgaa/FinkPW15} (Section~\ref{sec:circ-fixed}),
bundled crossings are more general, since the edges are not
restricted to be routed along a specified planar graph. Furthermore, we
resolve an open question stated in~\cite{DBLP:journals/jgaa/FinkPW15}.

\section{Bundled Crossings and Graph Genus}
\label{sec:prelim}
Let $G = (V,E)$ with $n=|V|$ and $m=|E|$ be a graph drawn in the plane (with
crossings). A \emph{bundled crossing} is a subset $C$ of the
crossings so that the following conditions hold:
\begin{compactenum}[(i)]
	\item Every crossing in $C$ belongs to edges
          $e_1\in E_1$ and $e_2\in E_2$, for two subsets
          $E_1, E_2 \subseteq E$ ($E_1$ and $E_2$ are the
          \emph{bundles} of the bundled crossing), and $C$ contains
	a crossing of each edge pair $e_1$, $e_2$, for $e_1 \in E_1$ and $e_2 \in E_2$.
	\item One can find a pseudodisk $D$---a closed polygonal region
		crossing every edge at most twice---that separates $C$ (in its interior)
		from all remaining crossings of the embedding. No edge $e \notin
		E_1 \cup E_2$ intersects $D$. The requirement
		ensures that the bundled crossing is visually separated from the
		rest of the drawing.
\end{compactenum}
The bundled crossing number of a drawing is the minimum number
of bundled crossings into which the crossings can be partitioned
(with disjoint pseudodisks).
The \emph{bundled crossing number} $\bc(G)$ of $G$ is the minimum number of
bundled crossings in a drawing of $G$. For a circular layout,
we denote the \emph{circular bundled crossing number} by $\bccirc(G)$.
If the circular order $\pi$ of vertices is prescribed, we speak of the
\emph{fixed circular bundled crossing number},~$\bccirc(G, \pi)$.
Clearly, $\bc(G) \le \bccirc(G) \le \bccirc(G, \pi)$.

We now discuss the relation of the bundled crossing number to the
orientable genus of the graph. More specifically, consider the unrestricted
drawing style for graphs in which double crossings of edges are
allowed, as well as self intersections and crossings of adjacent
edges. Let $\bcodd(G)$ be the minimum number of bundled crossings
achievable for $G$ in this unrestricted drawing style. We show that
$\bcodd(G)$ equals the graph genus.

\begin{theorem}
  For every graph $G$ with genus $\genus(G)$, it holds that $\bcodd(G) = \genus(G)$.
  \label{thm:bc-genus}
\end{theorem}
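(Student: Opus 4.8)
The plan is to prove the equality by showing both inequalities, using the standard correspondence between drawings and rotation systems / band decompositions of surfaces. For the direction $\bcodd(G)\le\genus(G)$, I would start from an embedding of $G$ on the orientable surface $S_h$ of genus $h=\genus(G)$. Cut the surface along the $h$ handles to obtain a planar surface with $2h$ boundary disks, paired up; reattaching each handle corresponds to identifying a pair of disks. The key observation is that the edges passing through a single handle can be rerouted in the plane so that they form exactly one bundled crossing: the edges entering one end of the handle, taken in their cyclic order, cross the edges leaving the other end, and — crucially — the orientability of the surface guarantees that the two cyclic orders are \emph{reverses} of each other, so the two bundles cross in a single "grid" of pairwise crossings that can be enclosed in a pseudodisk met at most twice by each edge. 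Doing this for all $h$ handles yields a drawing with exactly $h$ bundled crossings, so $\bcodd(G)\le h$. I would need to be a little careful that the pseudodisks can be chosen disjoint and that no third edge enters them; routing the handle-replacements in small disjoint neighborhoods of the (contracted) handles handles this.

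For the reverse inequality $\genus(G)\le\bcodd(G)$, I would take an optimal unrestricted drawing realizing $b=\bcodd(G)$ bundled crossings, with disjoint pseudodisks $D_1,\dots,D_b$ separating the bundled crossings. Outside all the $D_i$ the drawing is crossing-free, so it embeds in the plane (a genus-$0$ surface with $b$ holes, one per $D_i$). Now I want to replace each pseudodisk, together with the grid of crossings inside it between bundles $E_1^{(i)},E_2^{(i)}$, by a handle: the two bundles enter and leave $D_i$ along arcs of $\partial D_i$, and a single handle routes $E_1^{(i)}$ over $E_2^{(i)}$ without crossings. The content of this step is that a "generalized bundled crossing" — an arbitrary crossing pattern between two groups of edges confined to a pseudodisk that each edge meets at most twice — can always be replaced by a single handle; this is where the pseudodisk condition (each edge crosses $\partial D$ at most twice) is doing real work, since it forces each edge to enter and leave $D_i$ exactly once and hence to behave like a single strand through the handle. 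Summing, we attach $b$ handles to a planar surface, getting a surface of genus $\le b$ into which $G$ embeds, hence $\genus(G)\le b$.

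The main obstacle I expect is the orientability/twisting bookkeeping in \emph{both} directions: showing in the first direction that a handle always produces an \emph{untwisted} grid of crossings (a bundled crossing) rather than a Möbius-like pattern, and showing in the second direction that the handle replacing a pseudodisk can always be attached in an orientation-preserving way so that the resulting surface is the orientable $S_b$ and not a non-orientable surface. The cleanest way to handle this is to set up the correspondence at the level of band/ribbon decompositions: a bundled crossing between $E_1$ and $E_2$ corresponds exactly to a pair of bands crossed as a handle with a consistent orientation, and I would phrase the whole argument as a bijection between families of disjoint pseudodisks carrying bundled crossings and systems of disjoint handles, with orientability preserved on the nose. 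A secondary technical point is adjacent edges and self-crossings: these are harmless here precisely because we are in the unrestricted model, so the grid inside a pseudodisk may legitimately involve an edge crossing itself or a neighbor, which is exactly what lets the lower-bound construction go through without the genus-vs-$\bc$ gap mentioned for simple drawings. I would remark at the end that in the simple drawing model these self-/double-crossing grids are forbidden, which is the source of the strict inequality $\bc(G)>\genus(G)$ for some graphs.
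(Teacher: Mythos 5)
Your reduction of bundled crossings to handles (the direction $\genus(G)\le\bcodd(G)$) matches the paper's argument and is fine: each bundled crossing is replaced by a handle with one bundle routed through it and the other over it, and attaching an untwisted handle to an orientable surface keeps it orientable, so the orientability bookkeeping you worry about is immediate. The gap is in the other direction, $\bcodd(G)\le\genus(G)$, exactly at the step you flag and then dismiss. If you cut the embedded surface along the handles and re-join the two boundary circles of each pair by a grid of arcs drawn ``in a small neighborhood of the contracted handle,'' you need that neighborhood to contain no part of the drawing other than the edges using that handle; otherwise the flattening arcs cross edges outside the two bundles, which is forbidden inside a bundled crossing's pseudodisk and turns into extra, uncontrolled bundled crossings. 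For an arbitrary embedding no such empty neighborhood exists: the two feet of a handle may be separated by the rest of the drawing (take any triangulation of the surface --- every channel between the two boundary circles created by the cut is blocked by edges), and applying a homeomorphism to a standard ``sphere with $g$ small handles'' picture does not help, since it drags the rest of the drawing into the strips between the feet. So ``exactly one bundled crossing per handle'' does not follow from your construction as stated.

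The missing idea is a canonical planar representation that guarantees an edge-free region for all the re-connections. The paper uses the fundamental $4g$-gon with the standard identification word $a_1, b_1, a_1', b_1', \ldots, a_g, b_g, a_g', b_g'$: after a small perturbation all vertices lie inside the polygon and edges meet its boundary only in the relative interiors of the sides, so the entire drawing lies inside the polygon and the identifications are realized by arcs in the empty exterior; because the four sides belonging to each handle are consecutive, the $a_k$-arcs can be drawn nested, the $b_k$-arcs nested, and the two families cross in exactly one grid, the $g$ groups being pairwise independent. To repair your version you would either invoke this canonical polygonal schema, or prove a normalization lemma producing $g$ disjoint once-holed-torus neighborhoods that meet the drawing only in arcs running through the respective handle --- which is essentially the same statement in different clothing.
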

\begin{proof}
  It is easy to show that $\genus(G) \le \bcodd(G)$. We take a
  drawing of $G$ with the minimum number of bundled
  crossings, $\bcodd(G)$, on the sphere. Then, for every bundled crossing, we add a
  handle to the sphere, where we route one of the bundles through the
  handle and one on top of it. This way we get a crossing-free drawing
  of $G$ on a surface of genus $\bcodd(G)$.

  For the other direction, assume that we have a crossing-free drawing
  of $G$ on a surface of genus $g = \genus(G)$. It is known that such a
	drawing can be modeled using the representation of a genus-$g$
	surface by a \emph{fundamental polygon} with $4g$ sides in the
	plane~\cite{Lazarus:2001:CCP:378583.378630}.
	More precisely, the sides of the polygon are
  numbered  in circular order $a_1,b_1,a_1',b_1', \ldots,
  a_g,b_g,a_g',b_g'$; for $1 \le k \le g$, the pairs $(a_k,a_k')$ and
  $(b_k,b_k')$ of sides are identified in opposite direction, meaning
  that an edge leaving side $a_k$ appears on the corresponding
	position of edge $a_k'$; see Fig.~\ref{fig:k6-torus} for an example
	showing $K_6$ drawn in a fundamental square that models a drawing on
	the torus.
  \begin{figure}[t]
		\begin{minipage}[t]{.49\textwidth}
			\centering
			\includegraphics[page=3]{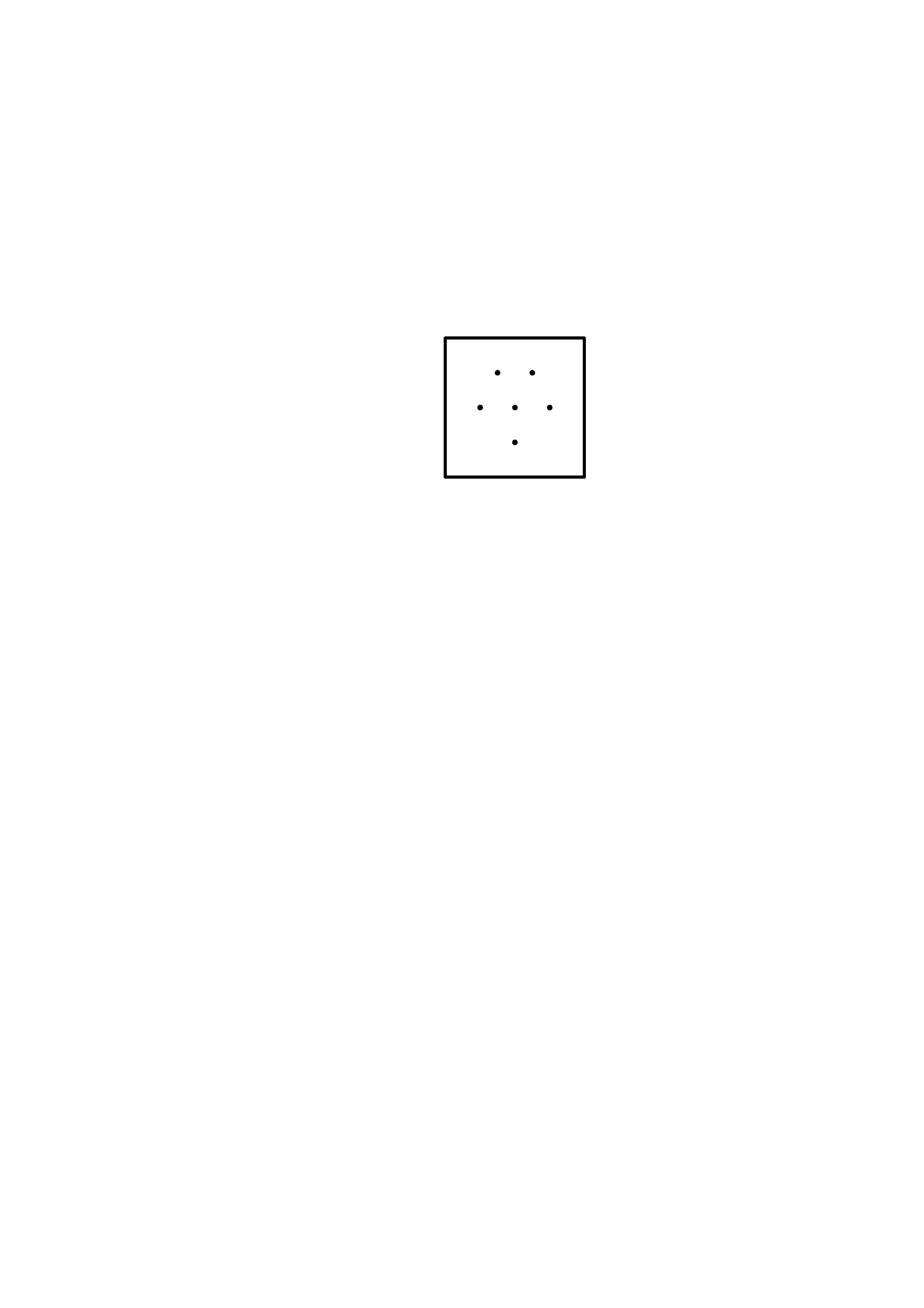}
			\caption{$K_6$ drawn in a fundamental square modeling a torus.}
			\label{fig:k6-torus}
		\end{minipage}
		\hfill
		\begin{minipage}[t]{.49\textwidth}
			\centering
			\includegraphics{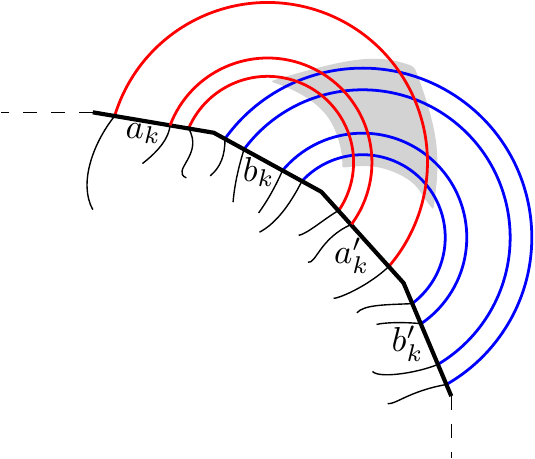}
			\caption{A single bundled crossing outside the fundamental polygon.}
			\label{fig:fundamental-polygon-bc}
		\end{minipage}
  \end{figure}
	Directly transforming a drawing on the surface into the fundamental
	polygon can lead to vertices appearing multiple times on the
	polygon's boundary; however, small movements of the
	vertices on the surface fix this. Thus, we assume that
  all vertices lie in the interior of the fundamental polygon, and
  all edges leave the polygon only in the relative interior of a side
  of the polygon; especially, every point of an edge appears at most
  twice on the boundary of the fundamental polygon. (There
	can be parts of edges connecting two points on different
	sides of the polygon without directly touching a vertex as in
	Fig.~\ref{fig:k6-torus}).

  Given such a crossing-free representation of the drawing of $G$ via
  the fundamental polygon, we create a new drawing of $G$ in the plane
  by connecting parts of the edges outside of the fundamental polygon.
  For every $1 \le k \le g$, we connect identified
  points of edges on $a_k, a_k', b_k$, and $b_k'$ as shown in
  Fig.~\ref{fig:fundamental-polygon-bc}.
  It is easy to see that for every $k$, only one bundled crossing is
  necessary; furthermore, all $g$ tuples of four consecutive sides are
  independent. Hence, we get a drawing with $g$ bundled crossing,
  which proves that $\bcodd(G) \le \genus(G)$.
\end{proof}

When creating a drawing as in the second
part of the above proof, it may happen that we introduce (i)~double
crossings of edges, (ii)~crossings between adjacent edges, or (iii)~self
intersections of an edge. Certainly, a drawing
avoiding such configurations---that is, a \emph{simple} drawing---is preferred.
From now on, we only consider simple drawings. 
Let $\bc(G)$ denote the minimum number of bundled crossings achievable
with a simple drawing of $G$.
It turns out that insisting on a simple drawing sometimes makes
additional bundled crossing necessary.

\begin{lemma}
  For every graph $G = (V,E)$, $\bc(G) \ge \genus(G)$, and there are
  graphs $G$ for which $\bc(G) > \genus(G)$.
  \label{thm:genus-bc-nice}
\end{lemma}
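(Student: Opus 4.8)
The plan is to prove the inequality $\bc(G) \ge \genus(G)$ exactly as in the first part of the proof of Theorem~\ref{thm:bc-genus}: given a simple drawing of $G$ with $\bc(G)$ bundled crossings on the sphere, we attach one handle per bundled crossing, routing one bundle of each pair through the handle and the other over it, yielding a crossing-free drawing on a surface of genus $\bc(G)$; since every simple drawing is in particular an unrestricted drawing, nothing new is needed here beyond observing that the handle construction does not care about simplicity. The substance of the lemma is the strict separation: exhibiting a concrete graph $G$ with $\bc(G) > \genus(G)$.

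For the separating example I would look for a graph of genus~$1$ and argue that no simple drawing of it realizes all crossings in a single bundled crossing. The natural candidate is a small toroidal graph such as $K_6$ or $K_7$ (note $\genus(K_7)=1$, and $K_7$ triangulates the torus). First I would recall that a single bundled crossing means: there is one pseudodisk $D$, two edge-disjoint bundles $E_1,E_2$, and outside $D$ the drawing is planar, while inside $D$ every edge of $E_1$ crosses every edge of $E_2$ and there are no other crossings. Cutting the plane along $D$ and planarizing, a drawing with exactly one bundled crossing certifies that $G$ is ``almost planar'' in a strong structured sense; concretely, contracting the bundled crossing region shows $G$ has a drawing on the torus of a very restricted combinatorial type, or equivalently that $G$ minus a small edge cut is planar in a way compatible with a single handle. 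I would then derive a counting or parity contradiction: for a dense toroidal graph like $K_7$, Euler's formula on the torus is tight ($m = 3n$ for a triangulation), which leaves no slack for the extra structure forced by routing a whole bundle through a single handle while keeping the drawing simple (in particular, avoiding the self- and double-crossings that the unrestricted construction of Theorem~\ref{thm:bc-genus} used). If $K_7$ resists a clean argument, $K_6$ on the torus, or a disjoint union / one-point amalgamation of two copies of a genus-$1$ graph, should force $\bc > 1 = \genus$ by a connectivity argument: a single pseudodisk cannot simultaneously serve the crossings of two independent toroidal pieces.

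The main obstacle is pinning down precisely what a simple drawing with $\bc(G)=\genus(G)$ would have to look like and then contradicting it — that is, converting ``genus-$g$ surface drawing'' plus ``simple'' into a rigid enough combinatorial statement to refute. The handle-to-bundled-crossing translation of Theorem~\ref{thm:bc-genus} is one-directional in the simple setting: going from a genus-$g$ surface drawing back to the plane via the fundamental polygon is exactly what may create the forbidden double/self/adjacent crossings, so the crux is showing these cannot all be avoided for the chosen $G$. I expect the cleanest route is to pick $G$ so that any genus-$1$ embedding is essentially unique (e.g.\ a triangulation of the torus) and then check directly that the fundamental-polygon-to-plane rerouting unavoidably produces two edges of the same bundle crossing each other, or an edge crossing an adjacent edge, near the identified sides — a finite case analysis on the rotation at the handle. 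Finally I would note the lower bound and the example together give the claimed strict inequality for those~$G$.
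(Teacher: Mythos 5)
Your first half is fine and matches the paper: since every simple drawing is in particular an unrestricted one, $\bc(G) \ge \bcodd(G)$, and Theorem~\ref{thm:bc-genus} gives $\bcodd(G) = \genus(G)$; repeating the handle construction directly changes nothing. The problem is the second half, which is the actual content of the lemma. You correctly identify the right candidate ($K_6$, the paper's choice) and the right framework (a one-bundled-crossing drawing yields a torus drawing, viewed in the fundamental square, where simplicity forbids two edges at a common vertex from leaving to adjacent sides and forbids any edge from entering and leaving on adjacent sides), but you never carry out the refutation — you only announce that ``a finite case analysis on the rotation at the handle'' should work. That case analysis is precisely the proof. The paper does it (Lemma~\ref{lemma:bc-k6}): the adjacency constraints force each vertex of $K_6$ to be ``vertical'', ``horizontal'', or ``free''; both the vertical and horizontal classes must have size at least two (otherwise the drawing would be planar); all mixed and free--free edges stay inside the square and are crossing-free with the vertical/horizontal vertices on the outer face; if there is at most one free vertex this produces an outerplanar $K_{2,3}$, which does not exist, and if there are two free vertices the forced configuration (a $4$-cycle on the outer face with two interior vertices joined to everything) is nonplanar. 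Without some argument of this kind your proposal asserts $\bc(K_6) > 1$ but does not prove it.

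Your two fallback ideas also do not close the gap. The Euler-formula ``no slack'' remark for $K_7$ is not an argument: tightness of $m = 3n$ on the torus says nothing by itself about whether the rerouting through one handle can be kept simple, so you would still face the same case analysis. And the disjoint-union (or one-point amalgamation) of two genus-$1$ graphs fails outright: orientable genus is additive over components and blocks, so such a graph has genus $2$, and your connectivity argument (``one pseudodisk cannot serve two independent toroidal pieces'') only gives $\bc \ge 2$, which does not separate $\bc$ from $\genus$ for that graph.
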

\begin{proof}
  Since we only restrict the allowed drawings, we clearly have $\bc(G)
  \ge \bcodd(G)=\genus(G)$ and the first claim follows.

  For the second part of the lemma, consider the complete graph on six
  vertices, $K_6$, with genus $\genus(K_6) = 1$; there is a
  crossing-free drawing of $K_6$ on the torus.
  Every realization of $K_6$ with only one bundled crossing leads to a
  drawing on the torus. Consider such a
  drawing in the fundamental polygon model of the torus; in this case,
  the fundamental polygon can be seen as an axis-aligned square where
  edges can go to the upper, lower, left, and right side of the
  square. If two edges incident to the same vertex $v$ leave
  the square to adjacent sides, the edges cross in the bundled
  crossing, which is forbidden. Furthermore, no part of an edge can
  enter and leave the square on adjacent sides since this would result
  in a forbidden self-intersection.
	Given these constraints, it is not hard but technical to verify that
  $K_6$ cannot be embedded on the torus and, therefore, $\bc(K_6) >
  1$.  We refer to Lemma~\ref{lemma:bc-k6} in
  Appendix~\ref{sec:bc-k6}.
\end{proof}

It is easy to see that $\genus(G) = \Oh(m)$ by introducing a handle on
the sphere for each edge. Furthermore, for the complete graph
$K_n$, it is known that $\genus(K_n) = \lceil (n-3)(n-4)/12\rceil$, 
that is, $\genus(G) = \Theta(m)$ for some graphs.
Clearly, we cannot do better with bundled crossings, that is,
$\bc(G) = \Omega(m)$ for some graphs.
In Section~\ref{sec:circ-fixed} we show
that $\Oh(m)$ bundled crossings always
suffice, even if we are using a circular layout with a fixed order of
vertices. This means that for complete graphs, all bundled crossing number
variants and the genus are within a constant factor from each other.
An interesting question is how large the ratio between the bundled
crossing number and the graph genus can get for general graphs.

It is known that $\Omega(m^3/n^2)$ single crossings are necessary for
graphs with $n$ vertices and $m\ge4n$ edges~\cite{ACNS82}.
For dense graphs with $m = \Theta(n^2)$ edges,
$\Theta(m^2)$ crossings are required, while the bundled
crossing number is $\Oh(m)$. Therefore, using edge bundles
can significantly reduce visual complexity of a drawing.

\section{Circular Layouts}
\label{sec:circ}
Now we consider circular graph layouts.
Let $G = (V,E)$ be a graph and let $\pi = \left[ v_1,
\ldots, v_n \right]$ be a permutation of its vertices. The goal is
to draw $G$ in such a way that the vertices are placed on
the boundary of a disk in the circular order prescribed by $\pi$, all
edges are drawn inside the circle, and the number of bundled
crossings, $\bccirc(G, \pi)$, is minimized.
We start with a scenario when $\pi$ is predefined.

\subsection{Circular Layouts with Fixed Order}
\label{sec:circ-fixed}
Since in our model adjacent edges are not allowed to cross and the circular order
of the vertices is fixed, the order of outgoing edges for every vertex is
unique for~$\pi$. Hence, we may assume that $G$ is a matching.
Note that in this case the circular layout can be seen as a \emph{weak
pseudoline arrangement}, that is, an arrangement of pseudolines in
which not every pair of pseudolines has to cross~\cite{FM03}.

Assume that edges $e_1$ and $e_2$ are \emph{parallel}, that is,
they do not have to cross, and they start and end
as immediate neighbors. Clearly, in any simple drawing,
$e_1$ and $e_2$ do not cross and they are crossed
by exactly the same set of other edges; otherwise we would have a
forbidden double crossing.
Therefore, we can remove $e_2$ from the instance, find a drawing for the remaining graph,
and then reintroduce $e_2$ without an additional bundled crossing. To this end, we
route $e_2$ parallel to $e_1$ and let it participate in $e_1$'s bundled
crossings in the same bundle as $e_1$. Thus, we may assume that (i)~the input
contains no parallel pairs of edges. Additionally, we assume that
(ii)~every edge of the input graph has to be crossed by an edge (which can be checked by
looking at the given circular order); otherwise, such an edge is removed from the
input and later reinserted without crossings. In the following we
assume that the input satisfied both conditions (i) and (ii) and
such a graph is called \emph{simplified}.

Next we develop an approximation algorithm
for $\bccirc(G, \pi)$ by showing how to find a solution with only a
linear number of bundled crossings, and proving that every feasible
solution, even an optimum one, must have a linear number of bundled
crossings. We start with the lower bound.

\begin{lemma}
  Let $G = (V,E)$ be a simplified graph with fixed circular vertex
  order~$\pi$. Then, $\bccirc(G, \pi) \ge m/16$.
  \label{lemma:circ-fixed-lower-bound}
\end{lemma}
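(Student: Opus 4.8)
The plan is to show that a single bundled crossing can ``resolve'' only a bounded number of edges, where I call an edge \emph{resolved} by a bundled crossing if that crossing is the outermost (closest to the boundary circle) crossing on at least one side of the edge. Since the input is simplified, every edge must be crossed, so every edge has a first crossing when traversed from each of its two endpoints; hence every edge is resolved by at least one bundled crossing (in fact it gets ``charged'' to the bundled crossings containing its two extreme crossings). Summing, $\sum_{B} (\text{edges charged to } B) \ge m$, so it suffices to prove that each bundled crossing $B$ is charged by at most $16$ edges, which gives $\bccirc(G,\pi)\cdot 16 \ge m$.

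First I would set up the combinatorial picture: with $\pi$ fixed and $G$ a matching, the drawing is a weak pseudoline arrangement, and each bundled crossing $B$ has two bundles $E_1,E_2$ with the crossing region enclosed in a pseudodisk $D$ that no other edge meets and that every edge crosses at most twice. An edge charged to $B$ must have one of its two extreme crossings inside $D$; such an edge therefore enters $D$, so it belongs to $E_1\cup E_2$, and moreover the portion of the edge between its endpoint on the circle and its entry into $D$ is crossing-free. The key structural claim is that for each bundle, say $E_1$, only a constant number of its edges can have their extreme crossing inside $D$: the edges of $E_1$ whose first crossing (from one fixed side) lies in $D$ must reach the boundary circle without being crossed, and because $G$ is simplified (no parallel edges, every edge crossed), two such ``free'' parallel-to-the-boundary segments emanating from the same bundle and reaching the circle uncrossed would have to be parallel edges or would force a double crossing — contradiction unless their number is bounded. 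I expect the clean bound to be: at most $2$ edges of $E_1$ reach the circle uncrossed ``on the left'' of $D$ and at most $2$ ``on the right'', and similarly for $E_2$, and each such edge has two endpoints, giving the constant $16$ after accounting for both bundles, both sides, and the two extreme crossings per edge.

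The main obstacle will be pinning down exactly why only a bounded number of edges of a bundle can escape $D$ to the circle without further crossings — this is where simplifiedness (conditions (i) and (ii)) must be used carefully, and where the pseudodisk property ``crosses every edge at most twice'' does real work (an edge leaving $D$ and coming back would use up its two allowed crossings of $\partial D$, and any additional crossing with another bundle edge is then forbidden as a double crossing). I would isolate this as a sub-claim, prove it by a short planarity/parity argument on the arcs between $\partial D$ and the boundary circle in the two regions ``flanking'' $D$, and then the counting $m \le 16\,\bccirc(G,\pi)$ follows by the charging scheme. The exact constant is not important for the approximation factor, so I would not optimize it beyond getting $16$.
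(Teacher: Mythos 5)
There is a genuine gap, and it sits exactly where you flagged it: the sub-claim that each bundled crossing can be the ``first'' (extreme) crossing of only $O(1)$ edges of a simplified instance is false, so no constant replaces your $16$ and the whole charging scheme collapses. Counterexample: take two families of $p$ nested chords, $a_1,\dots,a_p$ and $b_1,\dots,b_p$, such that every $a_i$ crosses every $b_j$; make the left endpoints $u_1,\dots,u_p$ of the $a_i$ consecutive on the circle and the endpoints $s_1,\dots,s_p$ of the $b_j$ consecutive as well, but separate the \emph{other} endpoints of the $a_i$ (and of the $b_j$) by extra vertices, and pair those extra vertices up into edges $c_i$ that mutually cross and cross the $a$'s and $b$'s only near that far side. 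This instance is simplified: no two edges are parallel \emph{and} immediate neighbors at both ends (the far endpoints are separated, and the edges pick up different crossings with the $c_i$), and every edge is crossed. Now draw the $a$-bundle and the $b$-bundle so that all $p^2$ crossings between them lie in one pseudodisk $D$ near the $u$/$s$ side, with all crossings involving the $c_i$ occurring beyond $D$. Then every $a_i$ and every $b_j$ reaches the circle uncrossed before entering $D$, so this single bundled crossing is charged by $2p$ edges — unbounded. Your intuition that many uncrossed segments of one bundle reaching the circle force removable parallel edges or double crossings fails because the edges may diverge after the bundled crossing and acquire distinct crossings far away; simplifiedness is a global condition on whole edges, not on the local picture around one pseudodisk.

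The paper's proof is necessarily more global. It planarizes the optimal drawing (adding the boundary cycle), uses simplifiedness to argue that along each side of every edge there is at least one face that is not a quadrilateral, and combines this with the identity $f_3 = 4 + \sum_{k\ge 5}(k-4)f_k$ of Fink et al.~\cite{otherpaper} to get $f_3 \ge m/4$ triangular faces; it then invokes their fixed-drawing result that any partition of the crossings needs at least $f_3/4$ bundled crossings. Note that in the counterexample above the many triangles forcing bundled crossings are created by the $c_i$ far from $D$, which is exactly the global effect your local charging cannot see. To repair your approach you would need a charge that spreads over the whole drawing (e.g., via the triangular faces), at which point you are essentially reconstructing the paper's argument.
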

\begin{proof}
  Assume we are given a circular drawing of $G$ with the minimum number of bundled
  crossings. Such a drawing is a weak pseudoline arrangement.
  Let $H$ be the embedded planar
  graph that we get by planarizing the drawing, that is, by replacing
  each crossing by a crossing vertex and adding the cycle
  $(v_1, v_2, \ldots, v_n)$. We consider the
  faces of $H$. Some faces are bounded by original edges and an
  additional edge stemming from the cycle. Next we lower bound the number of triangles
  in the pseudoline arrangement and, hence, the triangular faces in $H$.

  Assume that we follow some edge in the drawing and analyze the faces
  at one of its sides. If all faces were quadrilaterals, then the edge
  would be completely parallel to a neighboring edge, which is not
  possible in a simplified instance. Hence, on both sides of the
  edge we find at least one face that is either a triangle or a
  $k$-gon with $k \ge 5$. For $k \ge 3$, let $f_k$ be the number of
  faces in the drawing of $H$ of degree~$k$. Since we see at least $2m$ sides of
  such faces and every side only once, we have $2m \le 3f_3 +
  \sum_{k \ge 5}kf_k$. Fink et al.~\cite{otherpaper} show that
  $f_3 = 4 + \sum_{k \ge 5} (k-4)f_k$. Hence,
  $
    2m \le 3f_3 + \sum_{k \ge 5} (k-4)f_k + 4 \sum_{k \ge 5} f_k
      \le 3f_3 + (f_3 - 4) + 4(f_3 - 4) \le 8f_3,
  $  which implies $f_3 \ge m/4$.
	Note that the bound is tight; see
	Appendix~\ref{appendix:triangle-tight-example}.

  To complete the proof of the lemma, we use a result of
  Fink et al.~\cite{otherpaper}, who show that the
  crossings in a fixed drawing can be partitioned into no less
  than $f_3/4$ bundled crossings. Since every drawing
  has at least $m/4$ triangles, $\bccirc(G, \pi) \ge m/16$.
\end{proof}

Note that, as Fink et al.~\cite{otherpaper} point out, there exist
circular drawings whose crossings can be partitioned into no less
than $\Theta(m^2)$ bundled crossings.
However, we can choose the drawing as long
as we follow the cyclic order, $\pi$, of vertices. We use this freedom
and show how to construct a solution with $\Oh(m)$ bundled crossings.

\begin{lemma}
  \label{lemma:circ-fixed-alg}
	Let $G = (V,E)$ be a graph with a fixed circular vertex order.
	We can find a circular layout with at most $m-1$ bundled
	crossings in $\Oh(m^2)$ time.
\end{lemma}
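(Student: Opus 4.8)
The plan is to process the edges one at a time, maintaining at each stage a circular layout of the already-inserted edges that realizes the fixed vertex order and uses as few bundled crossings as possible, then argue that inserting a new edge costs at most one additional bundled crossing. Since a matching on $n$ vertices has at most $m \le n/2$ edges but we only need an $m-1$ bound, it suffices to show that after inserting the $i$-th edge the drawing has at most $i-1$ bundled crossings; the first edge contributes nothing, and each subsequent edge adds at most one. By the reduction already described in the text (removing parallel edges and edges that are never crossed), we may assume the instance is simplified, so we are really drawing a matching in which every edge must cross something.

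First I would set up the inductive invariant: after inserting edges $e_1,\dots,e_i$ we have a circular drawing in which (a) every pair of edges that must cross (according to $\pi$) crosses exactly once, (b) no forbidden configurations occur, and (c) the crossings are partitioned into at most $i-1$ bundled crossings, each with its separating pseudodisk, and moreover the pseudodisks are ``nested'' along the new edge in a controlled way. To insert $e_{i+1}$, whose endpoints are two prescribed points on the circle, I would route it as close to the boundary as possible on one side, so that it passes monotonically through the region. It necessarily crosses exactly the set $S$ of previously-drawn edges that separate its two endpoints on the circle. The key geometric step is to route $e_{i+1}$ so that it crosses all of $S$ within a single thin ``corridor'': concretely, I would push $e_{i+1}$ so that it hugs the arc spanned by its endpoints and crosses each edge of $S$ near where that edge leaves the circle, keeping all $|S|$ new crossings inside one pseudodisk $D$ that meets each edge of $S$ at most twice (entering and leaving $D$) and meets $e_{i+1}$ at most twice and no other edge at all. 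Then $\{e_{i+1}\}$ together with $S$ forms one new bundled crossing, and all previously-existing bundled crossings survive untouched because $D$ is disjoint from their pseudodisks (we routed $e_{i+1}$ to avoid the interiors of those pseudodisks, which is possible since they sit in the ``interior'' of the disk away from the corridor).

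The main obstacle is showing that the single new pseudodisk $D$ can be made to meet each old edge at most twice and to avoid all old pseudodisks simultaneously --- in other words, that the edges of $S$ really do leave the circle in an order along the boundary that lets $e_{i+1}$ sweep past them without being forced to re-enter some old bundled region or to cross an edge of $S$ twice. This is where one uses that the order of outgoing edges at each vertex is forced by $\pi$ (so the layout is a weak pseudoline arrangement) and the freedom to redraw: if the current drawing of $e_1,\dots,e_i$ does not cooperate, I would appeal to the inductive construction itself, i.e.\ build the whole drawing in this greedy boundary-hugging fashion from the start so that the corridor structure is automatic. For the running time: each insertion requires identifying $S$ (which separating pairs, $\Oh(m)$ work) and rerouting along the boundary ($\Oh(m)$ per edge), giving $\Oh(m^2)$ total, and one finally verifies there is no forbidden double- or self-crossing, which follows because each pair of edges crosses at most once by construction and $e_{i+1}$ is routed simply.
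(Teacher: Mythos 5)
Your overall plan---insert the edges one by one, charge to each edge a single bundled crossing consisting of all its crossings with previously inserted edges, and conclude $m-1$ because the first edge gets an empty bundle---is exactly the right shape, and in fact the paper's construction can be read this way. But your proof has a genuine gap at the one step that carries all the difficulty. For your charging scheme to yield a legal partition, the final drawing must have the property that, along every edge $e$, the crossings assigned to $e$'s bundle are not interleaved with crossings assigned to other bundles: the pseudodisk $D_e$ meets $e$ in a single subarc, so if a crossing of $e$ with a later edge $e'$ lay between two crossings of $e$ with earlier edges, then $e'$ would intersect $D_e$, which is forbidden since $e'$ is not in $D_e$'s bundles. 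Equally, every later edge must be routable so that it crosses exactly its prescribed set of edges, avoids \emph{all} pseudodisks fixed at earlier stages, and still gathers its own crossings into one corridor. Your proposal asserts all of this ("the corridor structure is automatic" if one builds greedily from the start) rather than proving it; that sentence is circular, since the existence of such a cooperative greedy drawing is precisely the statement of the lemma. Moreover, your concrete recipe---cross each edge of $S$ near the endpoint where it leaves the circle---can be blocked: an earlier pseudodisk may already sit on $s$ next to that endpoint (because an earlier edge bundled its crossing with $s$ there), forcing the new edge to detour, and you give no argument that such detours cannot interleave bundles or trap a future edge. You also never fix the insertion order, and the invariant "pseudodisks are nested along the new edge in a controlled way" is never made precise or verified.

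The paper closes this gap with an explicit global construction that makes the separation property trivial: order the vertices on a line, draw each edge $(v_i,v_j)$, $i<j$, as a segment of fixed slope $\alpha$ followed by a vertical drop at $x(v_j)$. Then every crossing is between one vertical segment and one slanted segment, two edges cross iff their endpoints interleave (so the drawing is simple and optimal in its crossing set), and declaring each vertical segment's crossings to be one bundled crossing immediately gives disjoint pseudodisks; the edge whose right endpoint is last has an empty bundle, hence $m-1$. This is your scheme with the order fixed (by rightmost endpoint: an edge's vertical segment collects exactly its crossings with edges spanning that endpoint) and with the routing made concrete enough that no case analysis about avoiding earlier corridors is needed. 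To repair your write-up you would either have to adopt such an explicit routing/order, or genuinely prove the inductive invariant you state, including the interaction with edges inserted later.
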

\begin{proof}
  Recall that we may assume that the input graph is a matching.  
  Since only the circular order of the vertices matters for the
  combinatorial embedding, we transform the circle into a rectangle
  with $v_1, \ldots, v_n$ placed on the lower side from
  left to right; see Fig.~\ref{fig:insertion-alg-example}.
	\begin{figure}[tb]
		\centering
		\includegraphics[page=8]{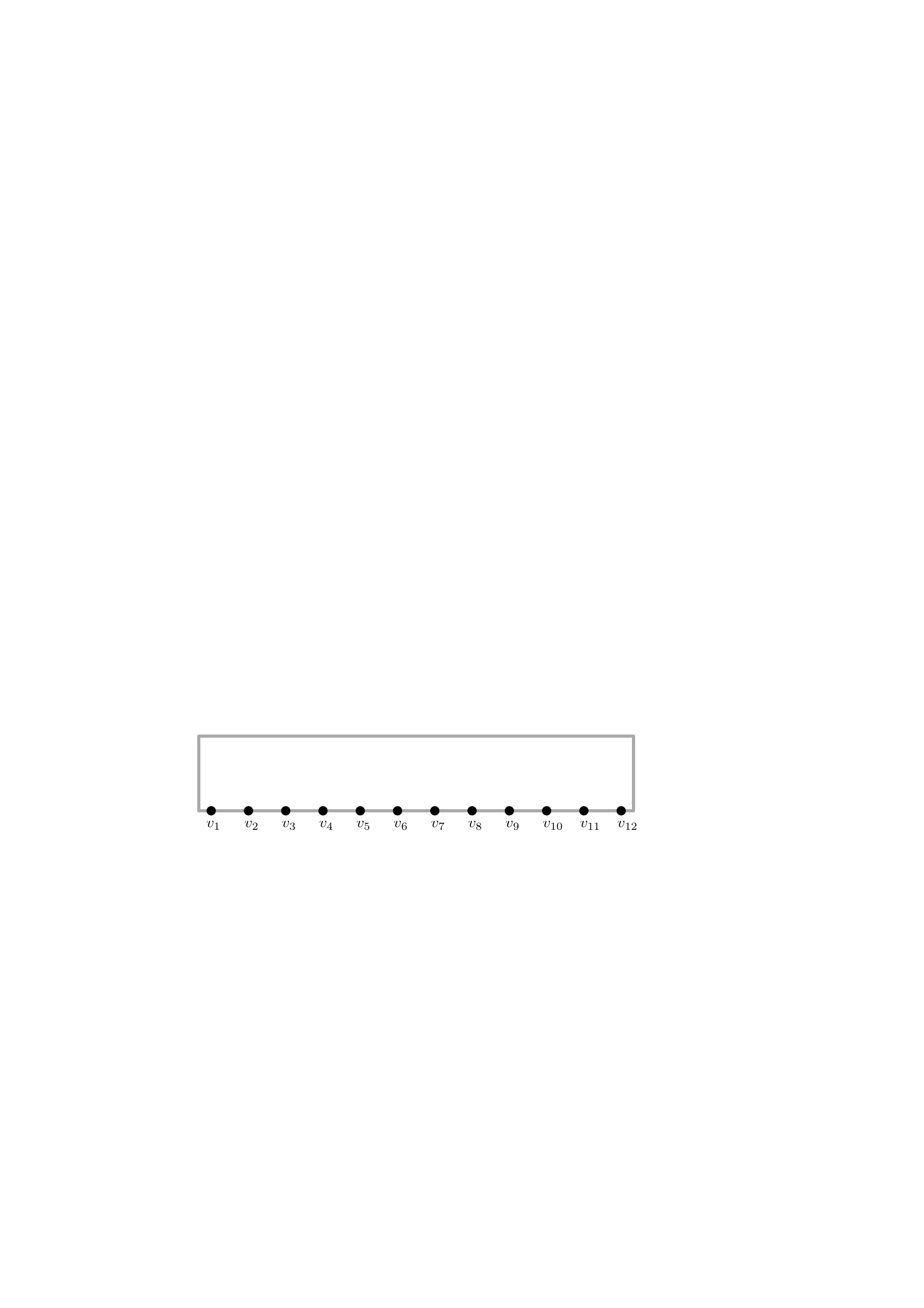}
		\caption{Finding a circular layout with $m-1$ bundled crossings
    (gray shaded).}
		\label{fig:insertion-alg-example}
	\end{figure}
	We produce a drawing 
    in which every edge
	$e= (v_i, v_j)$ with $i < j$ consists of two straight-line
	segments.\footnote{We thank an anonymous reviewer for suggesting
	this simplified proof.}
	The first segment leaves $v_i$ with a slope $\alpha$; when the
	segment is above $v_j$ it is followed by a vertical segment
	connecting down to $v_j$. Since there are only two slopes, every
	crossing is between a vertical segment and a segment of slope
	$\alpha$. It is easy to see that two edges $(v_i, v_j)$ and
	$(v_{i'}, v_{j'})$ cross if the endvertices are interleaved, that is, if
	$i < i' < j < j'$ or $i' < i < j' < j$. In that case, the edges
	have to cross in any possible embedding and we do not introduce additional
    crossings.

	Finally, we create a single bundled crossing for each edge $e$
	consisting of all crossings of $e$'s vertical segment. It is easy to
	see that this yields a feasible partitioning of all crossings into
	bundled crossings. Since the edge ending at vertex $v_n$ will not
	have any crossing on its vertical segment, the number of bundled
	crossings is at most $m-1$. The drawing is created in $\Oh(m)$ time
	but the time needed to
	produce a combinatorial embedding depends on the number of
	crossings; it is bounded by $\Oh(m^2)$.
\end{proof}


The upper bound of $m-1$ is tight: a matching in which every edge
crosses every other edge requires that many bundled crossings.
Combining the algorithm
and the lower bound of Lemma~\ref{lemma:circ-fixed-lower-bound}, we
get the following result.

\begin{theorem}
	For a graph $G$ with a fixed circular vertex order, we can
	find a $16$-approximation for the fixed circular bundled crossing number
	in $\Oh(m^2)$ time.
\end{theorem}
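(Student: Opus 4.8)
The plan is to obtain the theorem by plugging the constructive upper bound of Lemma~\ref{lemma:circ-fixed-alg} into the lower bound of Lemma~\ref{lemma:circ-fixed-lower-bound}, after first passing to a simplified instance. First I would simplify $G$ with respect to $\pi$ exactly as described above: iteratively delete one edge from every parallel pair, and delete every edge that is not forced to cross another edge. Both conditions can be tested from $\pi$ alone; since each deletion removes an edge and a subsequent rescan costs $\Oh(m)$, the reduction runs in $\Oh(m^2)$ time and yields a simplified matching $G'$ on $m' \le m$ edges with $\bccirc(G', \pi') = \bccirc(G, \pi)$. The equality holds because a deleted parallel edge can be re-routed alongside its partner and put into the same bundles, and a non-crossed edge can always be reinserted crossing-free, so neither operation changes the fixed circular bundled crossing number.

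Next I would run the algorithm of Lemma~\ref{lemma:circ-fixed-alg} on $G'$; in $\Oh((m')^2) = \Oh(m^2)$ time it produces a circular layout of $G'$ whose crossings are partitioned into at most $m' - 1$ bundled crossings. Reinserting the deleted edges --- each parallel edge into its partner's bundles, each non-crossed edge crossing-free --- turns this into a circular layout of $G$ with vertex order $\pi$ that still uses at most $m' - 1$ bundled crossings; denote this number by $\alg$. On the other hand, applying Lemma~\ref{lemma:circ-fixed-lower-bound} to the simplified instance gives $\bccirc(G, \pi) = \bccirc(G', \pi') \ge m'/16$, i.e., $16\,\bccirc(G,\pi) \ge m'$. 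Combining the two, $\alg \le m' - 1 < m' \le 16\,\bccirc(G,\pi)$, which is the claimed $16$-approximation; the degenerate case $m' = 0$ (no crossing is forced) is trivial. The overall running time is $\Oh(m^2)$, dominated by the simplification and by the construction of the combinatorial embedding in Lemma~\ref{lemma:circ-fixed-alg}.

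I expect no real obstacle here --- the statement is essentially a repackaging of the two preceding lemmas --- so the only thing that needs genuine care is the simplification step: checking that deleting parallel and non-crossed edges changes $\bccirc(\cdot,\pi)$ neither upward nor downward, and that the (possibly cascading) deletions can all be carried out within the $\Oh(m^2)$ budget. Once that is settled, the approximation ratio is immediate from $\alg \le m'-1$ and $\bccirc(G,\pi) \ge m'/16$.
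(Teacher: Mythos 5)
Your proposal is correct and follows essentially the same route as the paper: the theorem there is obtained exactly by combining the simplification step with the lower bound of Lemma~\ref{lemma:circ-fixed-lower-bound} ($\bccirc(G,\pi)\ge m'/16$ on the simplified instance) and the $m'-1$ upper bound of Lemma~\ref{lemma:circ-fixed-alg}, with deleted edges reinserted crossing-free or alongside their parallel partners. The only (harmless) difference is bookkeeping: the paper performs the simplification in $\Oh(m)$ time, while your $\Oh(m^2)$ bound for it still fits the stated running time.
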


\paragraph{Fixed-Parameter Tractability.}
We now show that deciding whether a solution with at most $k$ bundled
crossings exists is fixed-parameter tractable with respect to~$k$. The
crucial instruments for achieving this are the
graph simplification and the lower bound
of Lemma~\ref{lemma:circ-fixed-lower-bound}. If after the simplification,
$G$ has more than $16k$ edges, we know that
$\bccirc(G, \pi) > 16k/16 = k$ and we can reject the instance.
Otherwise, if at most $k$ edges remain, we can afford to solve
the problem exhaustively.

\begin{theorem}
  Let $G = (V,E)$ be a graph with a fixed circular vertex order
  $\pi$. Deciding whether $\bccirc(G, \pi) \le k$ is fixed-parameter
  tractable with respect to $k$ with a running time of $\Oh(2^{0.657
  k^{2}} k^{128k^2} + m)$.
  \label{thm:circ-fixed-fpt}
\end{theorem}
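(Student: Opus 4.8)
The plan is to follow the two-phase strategy suggested by the surrounding text: \emph{kernelize} the instance using the simplification procedure and the lower bound of Lemma~\ref{lemma:circ-fixed-lower-bound}, and then \emph{brute-force} the kernel. First I would apply the simplification of Section~\ref{sec:circ-fixed}: contract the unique ordering of outgoing edges at each vertex so that $G$ becomes a matching, repeatedly remove one edge from every parallel pair (reinserting it later into the same bundle at no cost), and delete every edge that is never crossed. This is doable in polynomial time in $m$ and does not change $\bccirc(G,\pi)$. After simplification, $G$ is simplified in the sense defined above, so Lemma~\ref{lemma:circ-fixed-lower-bound} applies: if the simplified graph has $m' > 16k$ edges then $\bccirc(G,\pi) \ge m'/16 > k$ and we reject. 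This gives a kernel of at most $16k$ edges, hence at most $32k$ vertices on the circle.

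Next I would bound the number of candidate drawings of the kernel. Because the cyclic order is fixed and the graph is a matching, a combinatorial drawing is determined by which pairs of (the $m' \le 16k$) edges cross and in what order the crossings appear along each edge — equivalently, the planarized arrangement $H$. By Lemma~\ref{lemma:circ-fixed-alg} there is always a drawing with at most $m'-1 < 16k$ bundled crossings, and in an optimal drawing with at most $k$ bundled crossings the total number of crossings is $\Oh(k^2)$ (each bundled crossing is a crossing between a bundle $E_1$ and a bundle $E_2$, and $|E_1|,|E_2| \le m' \le 16k$, so one bundled crossing has at most $(16k)^2 = 256k^2$ crossings; $k$ of them give $\Oh(k^3)$, but the degree argument in the lemma's proof and $f_3 \le$ total crossings sharpen this). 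More carefully, the number of crossings in \emph{any} feasible drawing we need to consider is polynomial in $k$: since $f_3 \ge m'/4$ and $f_3 = 4 + \sum_{k\ge 5}(k-4)f_k$, a drawing with $t$ crossings has $\Oh(t)$ faces, and a drawing that can be split into $\le k$ bundled crossings has $f_3 \le 4k$ triangles (as $f_3/4 \le \bccirc \le k$), hence $t = \Oh(k^2)$ crossings after accounting for the $m' = \Oh(k)$ edges. Enumerating all weak pseudoline arrangements on $m' = \Oh(k)$ pseudolines with $\Oh(k^2)$ crossings, and for each the partitions of those crossings into bundled crossings, is a finite check whose size I would bound by $2^{\Oh(k^2)} k^{\Oh(k^2)}$; combined with the $+m$ for reading the input and doing the simplification, this yields the claimed running time $\Oh(2^{0.657k^2} k^{128k^2} + m)$.

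The main obstacle is getting the exponent right — i.e.\ bounding the number of combinatorially distinct drawings that must be inspected and the cost of checking, for a given drawing, whether its crossings admit a partition into $\le k$ disjoint bundled crossings. For the first part I would appeal to the fact that a simple arrangement of $\ell$ pseudolines has at most $2^{\Oh(\ell^2)}$ combinatorially distinct types (wiring diagrams), and a \emph{weak} arrangement is obtained by choosing, for each of the $\binom{\ell}{2}$ pairs, whether it crosses; with $\ell = \Oh(k)$ this is $2^{\Oh(k^2)}$, the source of the $2^{0.657k^2}$ factor. For the second part, once a drawing is fixed its crossings number $\Oh(k^2)$, and deciding whether they partition into $k$ bundled crossings is a search over assignments of $\Oh(k^2)$ crossings to $k$ groups, each group additionally constrained to be a complete bipartite crossing pattern separable by a pseudodisk — this is where the $k^{\Oh(k^2)}$ factor arises, from roughly $k^{\Oh(k^2)}$ ways to color the $\Oh(k^2)$ crossings and a polynomial check of pseudodisk-separability per coloring. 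I would be careful to state the constants only up to the bound claimed (the paper writes $2^{0.657k^2}$ and $k^{128k^2}$, so the wiring-diagram count and the coloring count must be shown to fit under those); the verification that a proposed group is realizable by a pseudodisk crossing every edge at most twice is routine given the planarized graph $H$ and I would only sketch it.
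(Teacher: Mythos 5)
Your proposal matches the paper's proof essentially step for step: simplify in $\Oh(m)$ time, reject if more than $16k$ edges survive by Lemma~\ref{lemma:circ-fixed-lower-bound}, and otherwise exhaustively enumerate pseudoline arrangements extending the fixed circular order (the Felsner--Valtr count supplying the $2^{0.657k^2}$-type factor, enumerated via Yamanaka et al.) together with the at most $k^{128k^2}$ partitions of the at most $\binom{16k}{2}\le 128k^2$ crossings into up to $k$ groups, with a polynomial-in-$k$ feasibility check per candidate. The one place you diverge is the detour through triangle counts ($f_3\le 4k$) to bound the number of crossings, which is both unnecessary and not a valid implication as stated; the paper simply observes that in a simple drawing of the kernel (a matching with at most $16k$ edges) every pair of edges crosses at most once, so the $\Oh(k^2)$ bound on crossings is immediate.
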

\begin{proof}
	We simplify the graph in $\Oh(m)$ time. Afterwards, we
	check every combination of circular order, combinatorial embedding,
  and partitioning of the crossings into up to
  $k$ sets. If any such combination yields a feasible partitioning
  into bundled crossings, we accept the instance; otherwise, we reject
  it.

	There are at most ${16k \choose 2} \le 128k^2$ pairs of edges that
	need to cross. Hence, there are up to $k^{128k^2}$ ways to partition
	the crossings into up to $k$ sets. Since every pair of edges crosses
	at most once, the circular embedding
	can be extended to a pseudoline arrangement (in which every pair
	crosses exactly once). Felsner and Valtr
  proved~\cite{DBLP:journals/dcg/FelsnerV11} that
	there are at most $2^{0.657 k^{2}}$ arrangements of $k$
	pseudolines, and Yamanaka et
		al.~\cite{DBLP:journals/tcs/YamanakaNMUN10}
	presented a method that iterates over all pseudoline
	arrangements using $\Oh(k^{2})$ total space and $\Oh(1)$ time per
	arrangement. For each pseudoline arrangement, we can check whether
	an embedding with the prescribed circular order occurs as a part in
	$\Oh(k^3)$ time; within the same time bound, we can check whether a
	given partitioning of the crossings yields feasible bundled crossings.
	In total this takes $\Oh(2^{0.657 k^{2}} k^{128k^2}+m)$ time.
\end{proof}

\subsection{Circular Layouts with Free Order}
\label{sec:free}
We now study the variant of the problem in which
the circular order of the vertices is not known. How can one find a 
suitable order? A possible approach
would be finding an order that optimizes some aesthetic criteria
(e.g., the total length of the edges~\cite{GK06} or the number of pairwise crossings~\cite{BB04})
and then applying the algorithm of Lemma~\ref{lemma:circ-fixed-alg}.
Next we analyze such an approach.

In Section~\ref{sec:prelim}, we have already seen that $\bc(G) \ge
\genus(G)$. We can use this for getting a lower bound for the bundled
crossing number.
\begin{lemma}
    For every graph $G = (V,E)$ with $n$ vertices and $m$ edges, \linebreak
    $\bc(G) \ge \left( m - (3n-6) \right)/6$ and $\bccirc(G) \ge (m - (2n-3))/6$.
    \label{thm:genus-lb}
\end{lemma}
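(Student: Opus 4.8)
The plan is to derive both inequalities from Euler's formula by passing through the genus, using Lemma~\ref{thm:genus-bc-nice} and the drawing-to-surface construction from the proof of Theorem~\ref{thm:bc-genus}. Throughout I would use two elementary facts: any embedding of an $n$-vertex, $m$-edge graph with $f$ faces on a closed orientable surface of genus $g$ satisfies $n-m+f\ge 2-2g$ (with equality iff the embedding is cellular), and every face of a simple connected graph of minimum degree at least $2$ has degree at least $3$.

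\emph{First bound.} Since $\bc(G)\ge\genus(G)$ by Lemma~\ref{thm:genus-bc-nice}, it suffices to prove $\genus(G)\ge(m-(3n-6))/6$, which is the classical Euler bound: fix a minimum-genus (hence cellular, for connected $G$) embedding of $G$, add chords inside its faces to obtain a triangulation $\hat G\supseteq G$ with $\hat m\ge m$ edges on the same surface, and apply $2\hat m=3\hat f$ together with $n-\hat m+\hat f=2-2\genus(G)$ to get $\genus(G)=(\hat m-(3n-6))/6\ge(m-(3n-6))/6$. For disconnected $G$ one sums over components; when $m\le 3n-6$ the inequality is vacuous, so one may assume $n\ge 3$.

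\emph{Second bound.} Here I would exploit the large face that a circular layout provides. Starting from an optimal circular drawing of $G$ in a disk with $v_1,\dots,v_n$ on the boundary in this cyclic order, I first add, hugging the boundary circle, a cycle through $v_1,\dots,v_n$ in order; this introduces no crossings, and I subdivide with a fresh degree-$2$ vertex every added edge that would duplicate an edge already in $G$, so the resulting graph $G'$ is simple. Then $G'$ has $n'\le 2n$ vertices, exactly $m+n'$ edges, and the added cycle $Z$ is a Hamiltonian cycle of $G'$. Applying the construction of Theorem~\ref{thm:bc-genus} to the drawing --- replace each of the $\bccirc(G)$ bundled crossings by a handle, routing one bundle through it and the other over it --- yields a crossing-free embedding of $G'$ on an orientable surface of genus $\bccirc(G)$ in which $Z$ still bounds the outer face $F$, so $\deg(F)=n'$ (note that the subdivision vertices also lie on $Z$, hence on $\partial F$). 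Since $G'$ is simple, connected, and of minimum degree at least $2$, every other face has degree at least $3$, whence $2(m+n')=\sum_{\text{faces}}\deg\ge n'+3(f'-1)$; combining this with $n'-(m+n')+f'\ge 2-2\bccirc(G)$ and simplifying gives $\bccirc(G)\ge(m-n'+3)/6$, which is at least $(m-(2n-3))/6$ because $n'\le 2n$.

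The routine parts are the arithmetic and the degenerate cases with few vertices. The step that needs care is the construction of $G'$ and its surface embedding: one must verify that $Z$ can indeed be routed along the boundary without crossings, that after turning the bundled crossings into handles $Z$ still bounds a single face, and that the subdivision bookkeeping keeps $\deg(F)=n'$ --- the last point is exactly what makes the bound improve from the ``$3n$'' of the general case to ``$2n$'', so it is the crux of the argument.
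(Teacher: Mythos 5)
Your proof is correct and follows essentially the same route as the paper: the first inequality is the classical Euler/genus bound combined with $\bc(G)\ge\genus(G)$, and the second exploits, exactly as the paper does, the fact that a circular layout has all vertices on the outer face, together with the handle construction behind Theorem~\ref{thm:bc-genus}. The only difference is bookkeeping: the paper adds $n-3$ chords triangulating the outer face and reuses the first bound with $m'=m+n-3$, whereas you add a (subdivided) boundary Hamiltonian cycle and redo the face-degree count with one face of degree $n'\le 2n$ --- an equivalent computation, with the minor bonus that your subdivision step explicitly sidesteps the parallel-edge issue that the paper's triangulation step glosses over.
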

\begin{proof}
    Assume we have a crossing-free drawing of graph $G$ on a surface of
    genus $\genus = \genus(G)$. The relation between vertices, edges, and
    faces is described by the Euler formula $n-m+f = 2 - 2\genus$.
    Combining this with $2m \ge 3f$, we get that
    $\bc(G) \ge \genus(G) \ge \left( m - (3n-6) \right)/6$.
    
    Now consider a circular drawing with the minimum number $k =
    \bccirc(G)$ of bundled crossings. All
    $n$ vertices lie on the outer face. Hence, we can add $n-3$
    edges triangulating the outer face without introducing new
    crossings. We get a new graph $G'$ with $m' = m + n-3$ edges and a
    (non-circular) drawing of $G'$ with $k$ bundled crossings. Hence, 
    $k \ge \big(m' - (3n-6)\big)/6 = \big(m - (2n-3)\big)/6$.
\end{proof}

For dense graphs with more than $2n$ edges, we can get a
constant-factor approximation using the upper bound of $m-1$ with an
arbitrary order.

\begin{theorem}
\label{thm:fixed-approx}        
    Let $G = (V,E)$ be a graph with $m \ge cn$ for some $c > 2$.
    There is an $\Oh(n^2)$-time algorithm that computes a solution for the
    circular bundled crossing number with an approximation factor of
    $
    \frac{6c}{c-2}.
    $
\end{theorem}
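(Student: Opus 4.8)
The plan is to combine the two facts we already have: the linear-size construction from Lemma~\ref{lemma:circ-fixed-alg} and the genus-based lower bound from Lemma~\ref{thm:genus-lb}. First I would fix an arbitrary circular order $\pi$ of the $n$ vertices and run the algorithm of Lemma~\ref{lemma:circ-fixed-alg}; this produces, in $\Oh(m^2)$ time, a circular layout with at most $m-1$ bundled crossings. (If we only want $\Oh(n^2)$ time, I would first pass to a simplified graph, or note that the relevant regime is $m = \Theta(n)$ since otherwise $c$ is not constant; the bound $m-1 \le cn$-type reasoning keeps the running time in $\Oh(n^2)$ when $m = \Oh(n)$ — this point needs a careful but routine statement.) Call this solution's value $\alg \le m - 1 < m$.

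Next I would invoke the lower bound $\bccirc(G) \ge (m - (2n-3))/6$ from Lemma~\ref{thm:genus-lb}; since $\opt = \bccirc(G)$ is what we are approximating, we have $\opt \ge (m - (2n-3))/6 > (m - 2n)/6$. Now plug in the density hypothesis $m \ge cn$, equivalently $n \le m/c$, to get
\begin{equation*}
\opt > \frac{m - 2n}{6} \ge \frac{m - 2m/c}{6} = \frac{m(c-2)}{6c}.
\end{equation*}
Combining with $\alg < m$ yields $\alg / \opt < m \big/ \frac{m(c-2)}{6c} = \frac{6c}{c-2}$, which is exactly the claimed approximation factor.

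The main obstacle — really the only subtlety — is bookkeeping the constants precisely: the lower bound has $2n-3$ rather than $2n$, and the upper bound is $m-1$ rather than $m$, so I should check that the $-3$ and the $-1$ push the ratio in the favorable direction (they do: $m-1 < m$ and $(m-(2n-3))/6 > (m-2n)/6$), so the clean bound $\frac{6c}{c-2}$ genuinely holds and not just asymptotically. I would also double-check the running time claim: the drawing itself takes $\Oh(m)$ time, and producing a combinatorial embedding takes time proportional to the number of crossings, which is $\Oh(m^2)$; to honestly claim $\Oh(n^2)$ one either restricts attention to the case $m = \Oh(n)$ (the only case where the statement is non-vacuous, since $c$ is a constant and $m < \binom{n}{2}$ forces nothing, but the interesting instances have $m$ linear) or simplifies first so that $m = \Oh(n)$ after removing parallel edges. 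Modulo this routine care with constants and the time bound, the theorem follows immediately by chaining the two lemmas.
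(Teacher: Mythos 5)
Your proposal is correct and follows essentially the same route as the paper: it chains the upper bound of at most $m-1$ bundled crossings from Lemma~\ref{lemma:circ-fixed-alg} with the lower bound $(m-(2n-3))/6$ of Lemma~\ref{thm:genus-lb} and uses $m \ge cn$ to bound the ratio by $\frac{6c}{c-2}$, which is exactly the paper's argument (the paper just carries the $-1$ and $+3$ through the algebra instead of discarding them up front). The only small quibble is your aside that $m=\Theta(n)$ is the ``only relevant regime''---dense graphs with $m=\Theta(n^2)$ also satisfy $m\ge cn$ for constant $c$---but since the paper's own proof is equally silent about reconciling the $\Oh(m^2)$ bound of Lemma~\ref{lemma:circ-fixed-alg} with the stated $\Oh(n^2)$ running time, this does not affect the comparison.
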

\begin{proof}
		Using the algorithm of Lemma~\ref{lemma:circ-fixed-alg}, we find a
		solution with at most $m-1$ bundled crossings.
    By Lemma~\ref{thm:genus-lb}, $(m - (2n-3))/6$ crossings are required.
    Then the approximation factor is
$\frac{m-1}{(m - (2n - 3))/6}=6\left(1 + \frac{2n-4}{m-2n+3}\right)\le
6\left(1 + \frac{2n}{m-2n}\right) \le 6\left(1 +
\frac{2}{c-2}\right)=\frac{6c}{c-2}$, 
which is constant for every $c > 2$ and $n \ge 1$.
\end{proof}

For constructing a constant-factor approximation algorithm for sparse
graphs with $m \le 2n$ one would need better bounds.  We next suggest
a possible direction by improving our algorithm
for some input graphs. The idea is to save some crossings by first
drawing an outerplanar subgraph of $G$.

\begin{lemma}
    \label{lm:free-upper}
    Let $G = (V,E)$ be a graph and $G^\star = (V, E^\star)$ be a subgraph
    of $G$ having $m^\star=|E^\star|$ edges that is outerplanar with respect 
    to a vertex order $\pi$.
    Then $\bccirc(G) \le \bccirc(G, \pi) \le 2(m - m^\star)$ and
    we can find such a solution in $\Oh(m^2)$ time.
\end{lemma}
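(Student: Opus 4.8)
The plan is to draw the outerplanar subgraph $G^\star$ crossing-free with all vertices on the circle in the order $\pi$, and then insert the remaining $m-m^\star$ edges one at a time, charging at most two bundled crossings to each inserted edge. First I would observe that since $G^\star$ is outerplanar with respect to $\pi$, it admits a planar drawing with $v_1,\dots,v_n$ on the outer circle in the order $\pi$ and all edges inside; fix such a drawing. This drawing partitions the disk into faces, and its ``weak dual'' structure is a tree (outerplanar graphs have treelike inner structure), which is what makes the charging work.

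Next I would handle the non-$G^\star$ edges. Take an edge $e=(v_i,v_j)\notin E^\star$. I want to route $e$ inside the disk so that it crosses the already-drawn edges of $G^\star$ in a controlled way. The key idea, analogous to the straight-line/two-slope construction in Lemma~\ref{lemma:circ-fixed-alg}, is that $e$ can be routed so that it enters the ``fan'' of faces of $G^\star$ on one side near $v_i$, travels monotonically, and arrives near $v_j$; the crossings of $e$ with $G^\star$'s edges can be organized into at most two bundled crossings --- intuitively one ``bundle'' for the edges $e$ must pass as it leaves $v_i$'s region and one for those it must pass approaching $v_j$, because the dual tree of an outerplanar graph gives a unique path between the two faces incident to $v_i$ and $v_j$ and this path can be swept with two pseudodisks. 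More carefully, I would process the non-$G^\star$ edges in a fixed order and maintain the invariant that the partial drawing, restricted to what matters combinatorially, still behaves like a layered two-slope arrangement, so that each newly inserted edge's crossings (both with $G^\star$ and with previously inserted edges) get absorbed into at most two new bundled crossings assigned to that edge, while previously created bundled crossings are reused rather than split. This yields at most $2(m-m^\star)$ bundled crossings total, and since $\bccirc(G)\le\bccirc(G,\pi)$ by the chain of inequalities noted after the definition, the stated bound follows.

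For the running time, computing an outerplanar embedding consistent with $\pi$ takes $\Oh(n)$ time; drawing $G^\star$ and then inserting the $m-m^\star$ remaining edges with explicit routings costs $\Oh(m)$ geometric work, but as in Lemma~\ref{lemma:circ-fixed-alg} producing the full combinatorial embedding (listing all crossings along each edge) is bounded by the number of crossings, which is $\Oh(m^2)$; so the whole construction runs in $\Oh(m^2)$ time, matching the claim.

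The main obstacle I expect is making precise the claim that the crossings of each inserted edge can always be grouped into \emph{at most two} bundled crossings that are validly separated by pseudodisks (each crossing every edge at most twice, and not intersecting any edge outside the two bundles), while simultaneously ensuring that inserting a later edge does not force an \emph{earlier} bundled crossing to be subdivided or invalidated. Handling the interaction between newly inserted edges and bundles created by previously inserted edges --- so that the count stays additive at $2$ per edge rather than growing --- is the delicate part; I would address it by always routing a new edge ``just inside'' the boundary of the region carved out so far, so that it either avoids old bundles entirely or passes through each as a single group, exactly as the two-slope picture in Fig.~\ref{fig:insertion-alg-example} keeps all of an edge's crossings on one monotone segment. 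If a cleaner argument is needed, one can instead reduce directly to Lemma~\ref{lemma:circ-fixed-alg}: contract $G^\star$'s structure and apply the matching construction to the remaining edges relative to a refined vertex order, paying a factor of $2$ because each original vertex may now host edges leaving to ``two sides'' of the fixed outerplanar drawing.
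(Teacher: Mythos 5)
Your high-level plan (draw $G^\star$ crossing-free in the order $\pi$, then insert the remaining $m-m^\star$ edges charging two bundled crossings each) is exactly the paper's strategy, but the step that carries the whole lemma is missing: you never specify a routing under which the "two bundles per inserted edge" claim can actually be verified, and you yourself flag this as the obstacle. The difficulty is not only the interaction with $G^\star$ (there, since edges of $E^\star$ are pairwise non-crossing, a thin pseudodisk along a sub-arc of the inserted edge would even give \emph{one} bundle) but the crossings \emph{between two inserted edges}: such a crossing can be interior to both edges' crossing sequences, so it cannot simultaneously sit inside a tube that spans all crossings of $e$ and a tube that spans all crossings of $e'$, and the pseudodisks of a bundled-crossing partition must be disjoint. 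Your proposed invariant ("behaves like a layered two-slope arrangement", "route just inside the boundary of the region carved out so far") and your dual-tree sweep argument assert that this conflict never forces a third bundle on some edge or splits an earlier bundle, but give no mechanism ensuring it; likewise the fallback "contract $G^\star$ and apply Lemma~\ref{lemma:circ-fixed-alg}, paying a factor of $2$" does not say where the crossings with $E^\star$ go.

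The paper closes exactly this gap with a concrete geometric construction refining Lemma~\ref{lemma:circ-fixed-alg}: the edges of $E^\star$ are drawn with a slope-$\alpha$ segment followed by a vertical segment (crossing-free among themselves by outerplanarity), and each edge $e=(v_i,v_j)\notin E^\star$ is drawn with \emph{three} segments --- a vertical segment rising from $v_i$ that crosses precisely the edges present at $x=x(v_i)$ that must cross $e$, then a slope-$\alpha$ segment, then a vertical segment down to $v_j$. One then argues that the vertical segments of $E^\star$-edges remain crossing-free, so every crossing lies on a vertical segment of an edge of $E\setminus E^\star$ against a slope-$\alpha$ segment; taking one bundled crossing per such vertical segment yields the $2(m-m^\star)$ bound with trivially disjoint pseudodisks, and in particular a crossing between two inserted edges is unambiguously assigned to the vertical segment it lies on. Without this (or an equivalent explicit routing and assignment rule), your argument does not yet establish the bound; the $\Oh(m^2)$ running-time discussion is fine.
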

\begin{proof}
    
    The algorithm is similar to the one used in Lemma~\ref{lemma:circ-fixed-alg} in which
	every edge consists of two segments. This
    time we initialize the embedding by adding the
    edges of $E^\star$ without crossings, each with a segment of slope
		$\alpha$. Next, we add the remaining edges from left to right
		ordered by their first vertex. When adding edge $e = (v_i, v_j)$ with
		$i<j$, we route the edge with two vertical segments and a middle
		segment of slope $\alpha$. We start upward from $v_i$ so that the
		first segment crosses all edges present at $x = x(v_i)$ that have
		to cross $e$, but no other edge. We start the middle segment with
		slope $\alpha$ there and complete with a vertical segment at
		$x = x(v_j)$. It is easy to see that any edge of $E^\star$ whose
		vertical segment could intersect $e$ must start left of $v_i$.
		However, our routing of $e$ places the possible crossing on a
		vertical segment of $e$. Hence, all vertical segments of edges of
		$E^\star$ are crossing-free. Creating a bundled crossing for each
		vertical segment of the edges of $E - E^\star$ results, therefore, in
		at most $2 (m - m^\star)$ bundled crossings.
\end{proof}

This bound is asymptotically tight; see Appendix~\ref{sec:bad-example-circular}.



\section{General Drawings}
\label{sec:general}
We now consider general (non-circular) drawings. 
Note that Lemma~\ref{thm:genus-lb} provides a lower bound for the bundled
crossing number, and 
Lemma~\ref{lemma:circ-fixed-alg} gives an algorithm that can be applied for 
 general drawings.
Combining the lower and the upper bounds, we get the following result for dense graphs.

\begin{theorem}
    \label{thm:general-approx}        
    Let $G = (V,E)$ be a graph with $m \ge cn$ for some $c > 3$.
    There is an $\Oh(n^2)$-time algorithm that computes a solution for the
    bundled crossing number with an approximation factor of
    $
    \frac{6c}{c-3}.
    $
\end{theorem}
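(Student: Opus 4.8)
The plan is to combine the two ingredients the excerpt has already set up: the algorithmic upper bound of Lemma~\ref{lemma:circ-fixed-alg}, which produces a drawing (circular, hence also a valid general drawing) with at most $m-1$ bundled crossings, and the genus-based lower bound of Lemma~\ref{thm:genus-lb}, which for general drawings gives $\bc(G) \ge \bigl(m - (3n-6)\bigr)/6$. Since any circular layout is in particular a general drawing, the $m-1$ upper bound transfers directly, and we just need to divide.

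The key steps, in order: first, invoke Lemma~\ref{lemma:circ-fixed-alg} on $G$ with an arbitrary vertex order to obtain, in $\Oh(n^2)$ time, a drawing realizing at most $m-1$ bundled crossings; note explicitly that this drawing need not respect any circular constraint for the general setting, so it is a fortiori admissible. Second, recall from the first half of Lemma~\ref{thm:genus-lb} that $\bc(G) \ge \genus(G) \ge \bigl(m-(3n-6)\bigr)/6$, which is positive precisely because $m \ge cn$ with $c>3$ forces $m > 3n-6$. Third, bound the ratio:
\begin{align*}
\frac{m-1}{\bigl(m-(3n-6)\bigr)/6}
  &= 6\cdot\frac{m-1}{m-3n+6}
   = 6\left(1 + \frac{3n-7}{m-3n+6}\right)\\
  &\le 6\left(1 + \frac{3n}{m-3n}\right)
   \le 6\left(1 + \frac{3}{c-3}\right)
   = \frac{6c}{c-3},
\end{align*}
using $m \ge cn$ in the last step and $n \ge 1$ to drop the $-1$ and the constants in the numerator/denominator appropriately. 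The running time is dominated by the $\Oh(n^2)$ construction of Lemma~\ref{lemma:circ-fixed-alg}; here $m = \Oh(n^2)$ is not guaranteed a priori, but since we only need the combinatorial solution and $\bc(G) \le m-1$, the relevant bound is stated as $\Oh(n^2)$ consistently with Theorem~\ref{thm:fixed-approx}, so I would phrase the time bound identically (or as $\Oh(m^2)$ if one wants to be faithful to Lemma~\ref{lemma:circ-fixed-alg}'s statement).

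There is essentially no obstacle here — this is a direct corollary, structurally identical to the proof of Theorem~\ref{thm:fixed-approx} with $2n$ replaced by $3n-6$ because the general-drawing lower bound loses the outerplanar triangulation trick that was available in the circular case. The only point that needs a word of care is confirming that the approximation ratio is well-defined and nonnegative, i.e.\ that the denominator $m-(3n-6)$ is strictly positive; this follows from $c>3$, and one should remark that for $c\le 3$ the genus lower bound can degenerate to zero, which is exactly why the hypothesis is needed. The whole argument is two or three lines once Lemmas~\ref{lemma:circ-fixed-alg} and~\ref{thm:genus-lb} are in hand.
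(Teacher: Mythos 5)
Your proposal is correct and follows essentially the same route as the paper: apply the $m-1$ upper bound of Lemma~\ref{lemma:circ-fixed-alg} with an arbitrary vertex order, combine it with the lower bound $\bc(G) \ge \bigl(m-(3n-6)\bigr)/6$ from Lemma~\ref{thm:genus-lb}, and bound the ratio by exactly the same chain of inequalities. Your added remark that $c>3$ is what keeps the denominator positive is a fair point of care but does not change the argument.
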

\begin{proof}
    By Lemma~\ref{thm:genus-lb}, $\bc(G) \ge (m-(3n-6))/6$, and by Lemma~\ref{lemma:circ-fixed-alg},
    $\bc(G) \le m-1$.
    Then the approximation factor of the algorithm of
    Lemma~\ref{lemma:circ-fixed-alg} is
    $\frac{m-1}{(m - (3n - 6))/6} = 6\left(1 + \frac{3n-7}{m-3n+6}\right)\le
6\left(1 + \frac{3n}{m-3n}\right) \le 6\left(1 +
\frac{3}{c-3}\right)=\frac{6c}{c-3}$.
\end{proof}

Can we improve the algorithm for general drawings?
Next we develop an alternative upper bound based on a 
planar subgraph $G^\star = (V, E^\star)$ of $G$, which
produces fewer bundled crossings if $m^\star = |E^\star| > 3m/4$.

\begin{lemma}
  Let $G = (V,E)$ be a graph, let $G^\star = (V, E^\star)$ be its
  planar subgraph, and let $m^\star = |E^\star|$.
  Then, $\bc(G) \le 4(m-m^\star)$.
  \label{lem:upper-bound-general}
\end{lemma}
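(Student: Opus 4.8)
The plan is to mimic the construction of Lemma~\ref{lm:free-upper}, but exploit that $G^\star$ is only required to be planar, not outerplanar, so we cannot place all of $E^\star$ on a circle without crossings; instead we start from a planar embedding of $G^\star$ and insert the edges of $E - E^\star$ one at a time, charging a bounded number of bundled crossings to each inserted edge. First I would fix a planar drawing of $G^\star$ in the sphere (or plane) and let $D^\star$ denote this crossing-free drawing. The edges of $E^\star$ carry no crossings at all, so the entire ``cost'' must be paid by the $m - m^\star$ remaining edges, and the target $4(m-m^\star)$ suggests a budget of exactly four bundled crossings per inserted edge.

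Next I would describe the insertion of a single edge $e = (u,v) \in E - E^\star$ into the current drawing $D$. The current drawing consists of $D^\star$ together with previously inserted edges; its crossings are already partitioned into bundled crossings, each isolated by its pseudodisk. I route $e$ as a curve from $u$ to $v$ that crosses the already-drawn edges; the key is to bound how many \emph{new} bundled crossings this creates and to make sure $e$'s crossings with each old bundle can be merged into a single bundled crossing with that bundle without violating the pseudodisk condition. The natural way to get the constant $4$ is to route $e$ so that it enters and leaves each already-existing bundled-crossing pseudodisk in a controlled way, and — crucially — to control crossings of $e$ with the \emph{planar} part $D^\star$: since $G^\star$ is planar, its dual graph / face structure lets us route $e$ through a sequence of faces of $G^\star$, crossing one $E^\star$-edge per step, and group these crossings into a bounded number of bundles. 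I expect the argument to bound separately (i)~crossings of $e$ with edges of $E^\star$, which form new bundled crossings, and (ii)~crossings of $e$ with previously inserted non-$E^\star$ edges, which can be absorbed into existing bundles or form new ones; the sum over all inserted edges should telescope to $4(m-m^\star)$. A cleaner alternative, which I would actually pursue, is to first use $G^\star$ planar to get a drawing of $G^\star$ with no crossings, then add the edges of $E - E^\star$ using a variant of the two-slope / vertical-segment trick: route each new edge with a bounded number of monotone segments so that each such edge contributes at most four bundled crossings (two ``sides'' times two, say), while the planar edges of $E^\star$ are never on the crossing side of any bundle.

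The main obstacle I anticipate is the pseudodisk requirement in the definition of a bundled crossing: each bundled crossing must be enclosed by a closed region meeting every edge at most twice and meeting no edge outside its two bundles. When we insert $e$ and let it participate in several old bundles, we must re-draw or re-route $e$ (and possibly nudge old edges) so that the old pseudodisks can be enlarged to include $e$'s new crossings while still meeting every edge — including $e$ — at most twice, and meeting no foreign edge. Guaranteeing this simultaneously for all bundles that $e$ touches, and doing so for all $m - m^\star$ insertions without the pseudodisks interfering, is the delicate part; I would handle it by ordering the insertions carefully (e.g., by a fixed linear order derived from a planar embedding of $G^\star$) and routing each new edge ``locally,'' staying inside a thin strip so that it crosses each old bundle's pseudodisk boundary at most twice, which keeps the at-most-twice condition intact and keeps distinct new bundles in disjoint regions. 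Once the routing is pinned down, the counting ($\le 4$ new bundled crossings per inserted edge, hence $\bc(G) \le 4(m-m^\star)$) is routine, and the running time is clearly polynomial.
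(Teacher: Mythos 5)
There is a genuine gap: your budget of ``at most four new bundled crossings per inserted edge'' is asserted, but you give no mechanism that makes it true, and the natural obstructions are exactly the ones the paper's construction is designed to kill. If you insert the edges of $E \setminus E^\star$ one at a time into an arbitrary planar embedding of $G^\star$ and try to bundle each new edge's crossings with $E^\star$ by thickening segments of that edge into thin strips, those strips are broken at every point where the new edge is crossed by a \emph{previously inserted} edge (such a crossing lies inside the strip but the crossing edge need not cross all the $E^\star$-edges of the bundle, violating condition~(i), and it pierces the pseudodisk, violating condition~(ii)). So the number of bundles charged to an edge is roughly one plus the number of times it is crossed by other inserted edges, and with $m-m^\star$ mutually crossing inserted edges this can be quadratic in $m-m^\star$, not linear. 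Your fallback of ``absorbing'' such a crossing into an existing bundle does not work in general either: condition~(i) forces the absorbed edge to cross \emph{every} edge of the opposite bundle inside the pseudodisk, which you cannot guarantee. In other words, the difficulty is global --- the drawing of all of $E\setminus E^\star$ must be chosen so that the inserted edges cross \emph{each other} in a bundleable pattern --- and a per-edge local insertion with ``telescoping'' does not address it.

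The missing idea, which is the heart of the paper's proof, is to first convert the planar subgraph into a \emph{two-page topological book embedding} with at most two $x$-monotone circular arcs per edge (Giordano et al.), so that all vertices lie on a spine. Splitting at the spine turns each page into a circular layout with fixed vertex order in which the arcs of $E^\star$ are crossing-free, and then the two-slope/vertical-segment construction of Lemma~\ref{lm:free-upper} applies on each page: every inserted arc collects \emph{all} of its crossings (with planar arcs and with other inserted arcs alike) on its vertical segments, giving at most $2(m-m^\star)$ bundled crossings per page and $4(m-m^\star)$ in total. Your ``cleaner alternative'' invoking the two-slope trick implicitly needs exactly this spine, but you start from an arbitrary planar embedding where no such structure exists, and you also leave the simplicity requirements (no double crossings, no crossings of adjacent edges, no self-intersections) unargued --- in the paper these are verified using the specific two-arc routing and the placement of the bends of the inserted edges. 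Without the book-embedding step and the resulting reduction to the fixed-order circular case, the proposal does not yield the claimed bound.
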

\begin{proof}
  We start with a topological book embedding of $G^\star$, that is, a
  planar embedding with all vertices on the $x$-axis and the edges
  composed of circular arcs whose center is on the $x$-axis. Giordano et 
  al.~\cite{Giordano201545} show how to construct such an
  embedding with at most two circular arcs per edge and all
  edges being $x$-monotone
  (that is, edges with two circular arcs cannot change the direction).

  We add the edges of $E' = E \setminus E^\star$ to get a non-planar
  topological book embedding (with up to two circular arcs per edge) and keep 
  the drawing simple, that is, free of
  self-intersections, double crossings, and crossings of adjacent edges.
  Then we split the drawing at the spine and interpret each half
  as a circular layout with fixed order. Using the algorithm of
  Lemma~\ref{lm:free-upper}, we get an embedding with at most
  $2(m-m^\star)$ crossings for each side 
  and $4(m-m^\star)$ crossings in total.

  It remains to show how to add an edge $e = (u,v) \in E'$. 
  Consider all planar edges incident to $u$ and $v$. 
  If we can add $e$ as a single circular arc above or below the
  spine without crossing any of these edges, we do so. Otherwise, 
  two edges $e_1$ adjacent to $u$ and $e_2$ adjacent to
  $v$ exist (see Fig.~\ref{fig:gen-alg-2-circs}), and $e$ must be
  inserted using two circular arcs.
  \begin{figure}[tb]
    \begin{minipage}[t]{.48\textwidth}
      \centering
      \includegraphics{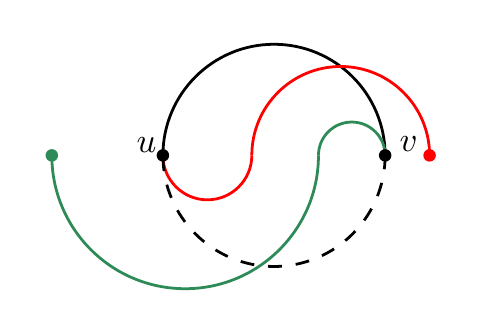}
      \caption{Adding edge $e = (u,v)$ requires two circular arcs.}
      \label{fig:gen-alg-2-circs}
    \end{minipage}
    \hfill
    \begin{minipage}[t]{.48\textwidth}
      \centering
      \includegraphics[page=2]{images/gen-alg-2-circs}
      \caption{Inserting edge $e = (u,v)$ with two circular arcs.}
      \label{fig:gen-alg-2-circs-adding}
    \end{minipage}
  \end{figure}
  We consider all these obstructing two-bend edges incident to
  $u$ and $v$ and insert $e$ by placing its bend next to the rightmost
  bend of an edge incident to $u$ (see
  Fig.~\ref{fig:gen-alg-2-circs-adding}), avoiding all intersections
  with planar edges. Bends of the edges incident to $u$ are
	ordered by their endvertex so that they do not cross.

  It is easy to see that there are no
	self-intersections and no crossings of adjacent edges. 
  There are also no double crossings: Otherwise, let $e_1$ and
	$e_2$ be a pair of edges that cross both above and below the
	spine. Assume that $e_1 \in E', e_2 \in E^\star$. Since
	$e_1$ consists of two segments, there must be adjacent planar
	edges that caused $e_1$'s shape. We find such an edge $e_1'$ that
	crosses with the planar edge $e_2$, a contradiction; see
	Fig.~\ref{fig:gen-alg-no-double-1}.
	\begin{figure}[t]
		\begin{subfigure}[t]{.48\textwidth}
			\centering
			\includegraphics{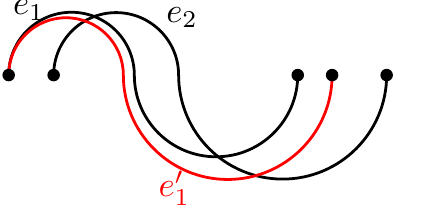}
			\caption{$e_1 \in E'$ and $e_2 \in E^\star$}
			\label{fig:gen-alg-no-double-1}
		\end{subfigure}
		\hfill
		\begin{subfigure}[t]{.48\textwidth}
			\centering
			\includegraphics[page=2]{images/gen-alg-no-double-cross}
			\caption{$e_1, e_2 \in E'$}
			\label{fig:gen-alg-no-double-2}
		\end{subfigure}
		\caption{Double crossings of edges are not possible}
		\label{fig:gen-alg-no-double}
	\end{figure}
	If $e_1, e_2 \in E'$, we find a planar
	edge $e_2'$ causing the two-arc shape of $e_2$, such that
	$e_1'$ and $e_2'$ cross, another contradiction; see
	Fig.~\ref{fig:gen-alg-no-double-2}.
\end{proof}


\section{Block Crossings in Metro Maps}
\label{sec:metro}
Our analysis has an interesting application for block crossings in
metro maps~\cite{DBLP:journals/jgaa/FinkPW15}. The block crossing
minimization problem (BCM) asks to order simple paths (metro lines)
along the edges of a plane graph (underlying metro network) so as to
minimize the total number of block crossings. Fink et
al.~\cite{DBLP:journals/jgaa/FinkPW15} present a method that
uses two block crossings per line on a tree network, and ask
whether a (constant-factor) approximation is possible. With the help
of the lower bound of Lemma~\ref{lemma:circ-fixed-lower-bound}, we affirmatively answer the question.
We provide a sketch of the proof; see Appendix~\ref{section:metro-maps}
for details.
\begin{theorem}
    \label{thm:block-crossings-tree}
    There is an $\Oh(\ell^2)$-time $32$-approximation algorithm for BCM, where
    $\ell$ is the number of metro lines and the underlying network is a
    tree.
\end{theorem}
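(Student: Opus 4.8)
My plan is to reduce the block-crossing problem on a tree network to a collection of fixed-order circular bundled-crossing instances, apply the algorithm and lower bound we already have for that setting, and combine the pieces.

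First I would set up the reduction. Root the tree $T$; for the subtree edge-structure, a natural idea is to cut $T$ into its edges (or into small gadgets around each degree-$\ge 3$ vertex) and observe that each edge $e$ of $T$ carries a set $L_e$ of metro lines, each of which must be ordered across $e$; a block crossing on $e$ is exactly a swap of two contiguous blocks in this order. So the per-edge structure is a permutation-sorting problem, and the coupling between edges comes only through the vertices of $T$: at a degree-$d$ vertex the orders on the incident edges must be consistent with a planar routing. The key observation is that this is precisely the structure of a fixed-order circular layout — place the lines' endpoints on the boundary of a disk in the order in which they enter/leave the tree, and a block crossing of metro lines corresponds to a bundled crossing in that circular layout. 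Thus I expect a statement of the form: for a tree network, the minimum number of block crossings equals (up to the same factors) $\bccirc(G_L,\pi)$ for an auxiliary graph $G_L$ (each metro line becomes an edge, or a short path, connecting its two terminal positions on the disk), with $\pi$ the fixed order induced by the tree. Since every line has two endpoints, the auxiliary instance is (essentially) a matching on $2\ell$ terminals, so $m = \Theta(\ell)$.

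Next, the two bounds. For the upper bound I would invoke Lemma~\ref{lemma:circ-fixed-alg}: in the fixed-order circular layout we can realize all crossings with at most $m-1 = \Oh(\ell)$ bundled crossings in $\Oh(m^2)=\Oh(\ell^2)$ time, and by the reduction each bundled crossing translates back to a block crossing on the tree (or into a bounded number of them — whatever the reduction costs, it is an absolute constant). For the lower bound I would invoke Lemma~\ref{lemma:circ-fixed-lower-bound}: after the standard simplification (remove parallel lines — lines that travel together through the whole relevant part of the tree — and remove lines that never need to cross anything), every feasible solution has at least $m/16$ bundled crossings, hence any block-crossing solution uses $\Omega(\ell)$ block crossings. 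Dividing the $\Oh(\ell)$ upper bound by the $\Omega(\ell)$ lower bound gives a constant approximation ratio; chasing the constants through Lemma~\ref{lemma:circ-fixed-lower-bound}'s factor $1/16$ together with the (at most factor-$2$) loss in the reduction yields the stated $32$.

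The main obstacle, and the part that needs the careful work relegated to Appendix~\ref{section:metro-maps}, is the reduction itself: showing that the tree's vertex-consistency constraints really do coincide with a \emph{fixed} circular order (so that the strong lower bound of Lemma~\ref{lemma:circ-fixed-lower-bound} applies), and that a bundled crossing in the circular picture can be decomposed into a bounded number of genuine block crossings living on single edges of the tree — block crossings are required to happen \emph{along an edge}, not spread across a vertex, so one must argue that a bundled crossing can be pushed onto one edge (or split between two adjacent edges) at the cost of only a constant factor. I would also need to handle the simplification bookkeeping (parallel metro lines, lines crossing nothing) so that the $m/16$ bound is legitimately available; this is routine but must be stated. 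Once the reduction's constant factor is pinned down, the rest is immediate from the results already proved for fixed circular layouts.
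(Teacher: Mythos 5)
Your lower-bound direction matches the paper exactly: interpret any solution with $k$ block crossings on the tree as a circular layout of a matching (lines connect their terminal leaves, the circular order of leaves is fixed by the plane embedding of the tree), so after the simplification step (justified by the Fink--Pupyrev--Wolff observation that removing one of two ``parallel'' lines does not change the optimum) Lemma~\ref{lemma:circ-fixed-lower-bound} gives a lower bound of $m/16$ block crossings. That half of your argument is sound and is what the paper does.

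The gap is in your upper bound. You propose to run the fixed-order circular algorithm of Lemma~\ref{lemma:circ-fixed-alg} and then translate each of its $m-1$ bundled crossings back into ``a bounded number'' of block crossings on the tree, asserting that ``whatever the reduction costs, it is an absolute constant.'' This step is not established and is precisely the hard direction: the drawing produced by Lemma~\ref{lemma:circ-fixed-alg} routes the edges freely inside the disk, completely ignoring the tree, whereas block crossings must be realized as block moves of line orders \emph{along edges of the tree} (lines are confined to the network and crossings of lines sharing an edge must occur on an edge, not at a vertex). Nothing in the paper, and nothing in your sketch, shows how to push an arbitrary bundled crossing of such a free drawing onto single tree edges with constant-factor loss; the paper explicitly notes that block crossings are much more restricted than bundled crossings, and it avoids this direction entirely. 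Instead, the paper's upper bound comes from the pre-existing algorithm of Fink et al.\ for tree networks, which produces a valid line ordering with at most two block crossings per line, i.e.\ fewer than $2m$ block crossings on the simplified instance in $\Oh(m^2)=\Oh(\ell^2)$ time (simplified lines are then reinserted for free). The constant $32$ is exactly $2m$ divided by $m/16$, not ``$16$ times a factor-$2$ reduction loss'' applied to the $m-1$ bound. To repair your proof you would either have to prove the missing circular-to-tree translation lemma (with an explicit constant), or replace your upper bound by the Fink et al.\ tree algorithm as the paper does.
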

\begin{proof}
    Suppose that we have a solution with $k$ block crossings on the
    tree. We can interpret the
    metro lines as edges in the drawing of a matching---connecting the
    respective leaves---in a circular layout. This layout has $k$
    bundled crossings, each stemming from a block crossing.
    Hence, we could use the lower bound of Lemma~\ref{lemma:circ-fixed-lower-bound}.
    To this end, we simplify the instance and consider the remaining $m$ lines.
    Lemma~\ref{lemma:circ-fixed-lower-bound} implies that an optimum solution has at least $m/16$ block
    crossings of the metro lines. We apply
    the method of Fink et al.~\cite{DBLP:journals/jgaa/FinkPW15} creating
    $2m$ block crossings in $\Oh(m^2)$ time, and reinsert the simplified lines.
\end{proof}

\section{Conclusion}
\label{sec:conclusion}
We have considered the bundled crossing number problem and devised upper
and lower bounds for general as well as circular layouts with and
without fixed circular vertex order. We have also shown the
relation of the bundled crossing number to the orientable graph genus and resolved
an open problem for block crossings of metro lines on trees.
The setting of bundled crossings still has several
interesting questions to offer. It seems very likely that the
circular bundled crossing number problem is \NP-hard, but a proof is missing.
Furthermore, an approximation or a fixed-parameter algorithm for the version with free
circular vertex order is desirable. Both questions are also interesting
for general graph layouts.

\bibliographystyle{abbrv}
\bibliography{bcross}

\newpage
\appendix
\chapter*{\appendixname}
\section{Bundled Crossing Number of $K_6$}
\label{sec:bc-k6}
\begin{lemma}
	$\bc(K_6) > 1$
	\label{lemma:bc-k6}
\end{lemma}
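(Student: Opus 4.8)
The plan is to rule out $\bc(K_6) = 1$; combined with the inequality $\bc(K_6) \ge \genus(K_6) = 1$ from Lemma~\ref{thm:genus-bc-nice}, this yields $\bc(K_6) > 1$. So suppose, for contradiction, that some simple drawing of $K_6$ realizes a single bundled crossing with bundles $E_1$ and $E_2$. Since the drawing is simple, $E_1 \cap E_2 = \emptyset$ (an edge in both bundles would cross itself), and removing one bundle destroys every crossing, so $K_6 - E_1$ and $K_6 - E_2$ are both planar; as a planar $6$-vertex graph has at most $12$ edges, this already forces $|E_1|, |E_2| \ge 3$ and $|E_1| + |E_2| \le 15$, which I will use later to keep a case analysis finite.

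First I would apply the handle construction from the proof of Theorem~\ref{thm:bc-genus}: routing $E_1$ through a single handle and $E_2$ over it turns the drawing into a crossing-free embedding of $K_6$ on the torus. Since $K_6$ is connected and non-planar, this embedding is $2$-cell, so by Euler's formula it has exactly $f = 9$ faces. Next I would pass to the fundamental-square model of the torus used in the second half of the proof of Theorem~\ref{thm:bc-genus}, perturbing the vertices so that all six lie in the interior of the square and every edge meets the boundary only in the relative interiors of its four sides, at most twice.

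The heart of the argument is to extract from simplicity of the original drawing two combinatorial constraints on how $K_6$ sits in the square (exactly the restrictions sketched after the statement of Lemma~\ref{thm:genus-bc-nice}): (C1) no maximal sub-arc of an edge enters and leaves the square through two \emph{adjacent} sides, since reconnecting the identified boundary points outside the square near the shared corner would force a self-intersection; and (C2) at every vertex $v$, the incident sub-arcs that leave the square do not use two adjacent sides, since the two corresponding edges would then cross inside the bundled crossing while being adjacent. Equivalently, each boundary-crossing sub-arc joins a pair of \emph{opposite} sides, and at each vertex all boundary-leaving incident sub-arcs use a single opposite pair; I would label each vertex and each such sub-arc \emph{horizontal}, \emph{vertical}, or \emph{interior} accordingly. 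Finally I would show that no embedding of all $15$ edges of $K_6$ is compatible with (C1), (C2), and the requirement that the square's interior be a disk: at least one edge must cross the boundary (else $K_6$ would be planar), so horizontal and/or vertical sub-arcs are present, and I would run through the possibilities for the multiset of labels, pruning heavily with the planarity of the subgraph drawn inside the square and of $K_6 - E_i$, to reach a contradiction in every case. I expect this last step — the case analysis over how the opposite-side pairings close up around the six vertices — to be the main obstacle; it is elementary but genuinely technical, which is why it is relegated to this appendix.
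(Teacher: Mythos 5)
Your setup is the same as the paper's (reduce a hypothetical one-bundled-crossing drawing to a drawing of $K_6$ in the fundamental square subject to the two constraints you call (C1) and (C2), and label each vertex vertical, horizontal, or interior), but the lemma's entire content is exactly the step you defer. Announcing that you ``would run through the possibilities for the multiset of labels, pruning heavily with planarity'' and flagging it as ``the main obstacle'' is not a proof; it is precisely the unproved combinatorial claim this appendix lemma exists to establish, and nothing in your sketch indicates the structural facts that make the analysis go through (or even terminate in a manageable number of cases). A secondary, fixable issue: you propose to pass first to the torus embedding via the handle construction of Theorem~\ref{thm:bc-genus} and then to an \emph{arbitrary} fundamental-square model; but your justifications of (C1) and (C2) refer to the pseudodisk and to adjacent edges crossing ``inside the bundled crossing,'' which only makes sense if the square is taken to be the complement of the pseudodisk of the original drawing (its crossing-free part, with the four boundary arcs determined by where the bundles $E_1$ and $E_2$ enter and leave). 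For a generic fundamental polygon of the torus embedding, (C1) and (C2) need not hold.

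The missing argument is closed in the paper by three concrete observations that your plan does not contain. First, any edge that leaves the square uses only one opposite pair of sides along its entire course (its first departing sub-arc fixes the pair, and by (C1) every later boundary-to-boundary sub-arc stays on that pair), so both of its endpoints get that same label; consequently every edge between differently labelled vertices and every edge between two free vertices is drawn entirely inside the square, where the drawing is crossing-free. Second, each vertical or horizontal vertex has an arc reaching the square's boundary without crossings, so all of $V_v\cup V_h$ lies on the outer face of the plane drawing formed by the edges that stay inside. Third, since $K_6$ is non-planar, both $V_v$ and $V_h$ are nonempty, hence each has at least two vertices (a leaving edge joins two vertices of the same label), and only two cases remain: if $|V_f|\le 1$, one of $V_v,V_h$ has three vertices and the mixed edges yield an outerplanar drawing of $K_{2,3}$, which does not exist; if $|V_v|=|V_h|=|V_f|=2$, the edges forced inside the square already number $4+8+1=13>3\cdot 6-6$, contradicting planarity of the interior drawing. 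Without these steps---in particular the outer-face/outerplanarity argument, which you never mention---your proposal establishes the framework but not the lemma.
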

\begin{proof}
	As pointed out in the proof of Lemma~\ref{thm:genus-bc-nice}, we
	need to verify that there is no embedding of $K_6$ in the standard
	square representing the torus in which
  (i) no two edges incident to the same vertex $v$ leave
  the square to adjacent sides and (ii) no part of an edge can
  enter and leave the square on adjacent sides.
	Hence, for every vertex, edges can
  leave the square either vertically (to the upper and lower side),
  horizontally (to the left and right side), or not at all. This
  yields a partition of $V$ into sets of ``vertical'' vertices
  $V_v$, ``horizontal'' vertices $V_h$ and free vertices $V_f$. It is
  clear that if there is a vertical or horizontal vertex, then there
  must be at least two vertices of the respective type. However, we
  know that both sets must be nonempty, since otherwise we would have
  a planar embedding of $K_6$. Hence, $|V_v|,|V_h| \ge 2$.

  All edges between free vertices as well as all edges between
  vertices of different sets (mixed edges) must be drawn entirely
  inside the square, and, hence, be crossing-free. Since they have
  edges leaving the square, all vertical and horizontal vertices must
  lie on the outer face of the planar embedding consisting of the
  previously mentioned edges not leaving the square. Now, assume that
  there is at most one free vertex. Then either $V_v$ or $V_h$
  contains at least three vertices; since the other set also contains
  at least two vertices, we find an outerplanar embedding of
  $K_{2,3}$ in the square. However, $K_{2,3}$ is not outerplanar.

  In the remaining case, we must have $|V_v| = |V_h| = |V_f| = 2$. The
  planar drawing in the sphere consists of a 4-cycle on the outer face
  connecting the vertical and horizontal vertices and the two free
  vertices in the interior connected to all other vertices. However,
  this is not possible in a planar way, a contradiction. Therefore, no
  simple drawing with only one bundled crossing
  exists, that is, $\bc(K_6) > 1$.
\end{proof}

\section{Triangles in Simplified Circular Instances}
\label{appendix:triangle-tight-example}
\begin{figure}[h]
	\centering
	\includegraphics[page=5]{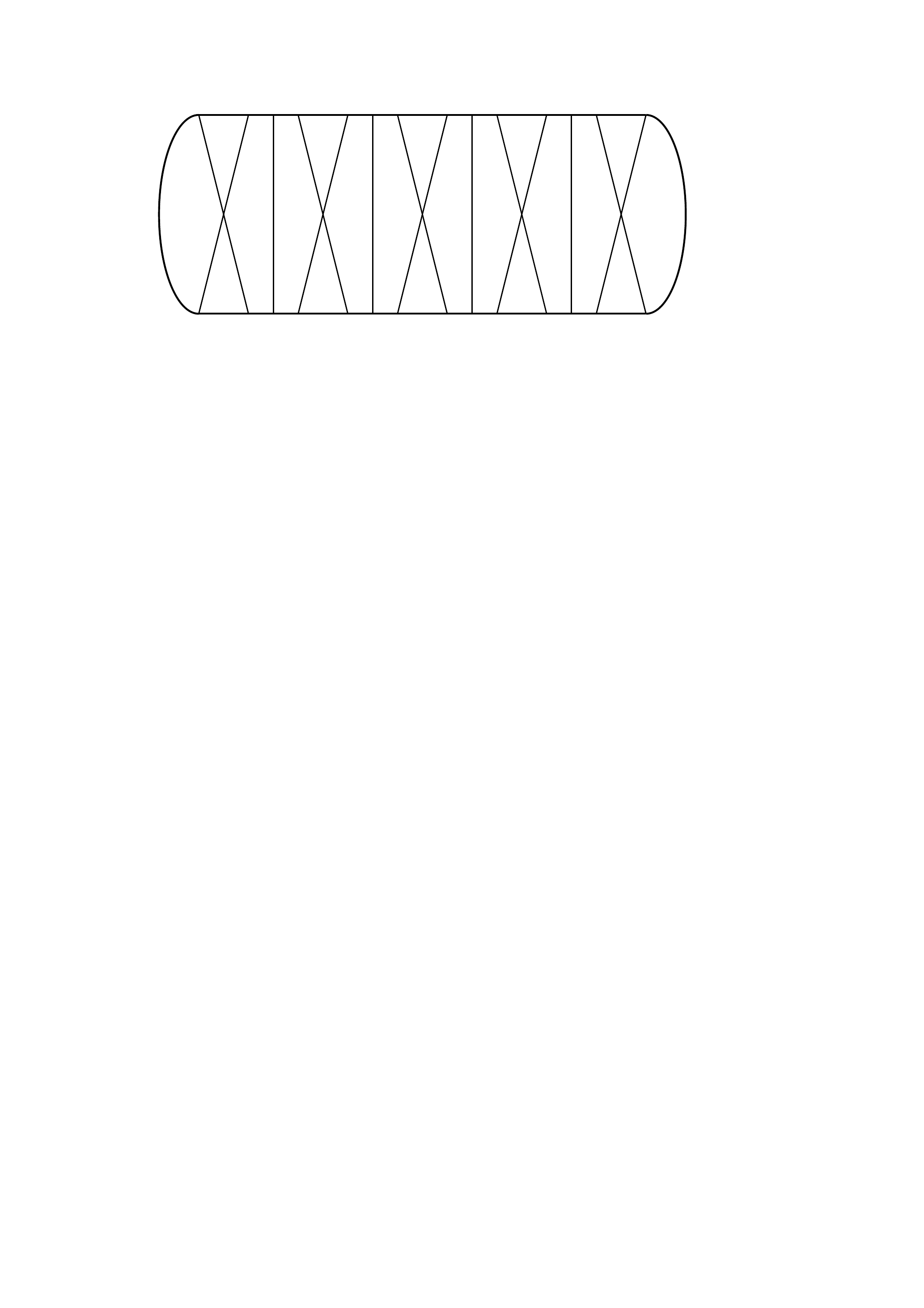}
	\caption{Circular layout of a simplified instance with only
	$m/4+3/2$ triangles (shown with 5 gadgets).}
	\label{fig:triangle-example}
\end{figure}
Figure~\ref{fig:triangle-example} shows an example of a circular
layout of a simplified instance for which the factor between triangles
and edges comes arbitrarily close to $1/4$ for large $m$.
The example is shown with 5 gadgets, but can easily be extended.
If we assign each edge to the gadget of its right endvertex, each
gadget except for the leftmost and the rightmost one has eight edges.
(The leftmost has three and the rightmost has six); each of these
gadgets has two triangles (leftmost and rightmost have three).
Furthermore, if we extend the example by adding more gadgets to the
left, additional edges that are routed through existing gadgets never
create additional triangles there since they are routed straight
through faces of degree four. Hence, for all gadgets except for the
leftmost and the rightmost one, the ratio between triangles and edges
is $1/4 + 3/(2m)$, which comes arbitrarily close to $1/4$ for large
$m$; this means that the lower bound of $m/4$ triangles for
simplified instances is asymptotically tight.

\section{Block Crossings in Metro Maps}
\label{section:metro-maps}
In this section, we give a short introduction in the setting of
minimizing the number of block crossings for metro lines running
through a metro network.
The input consists of a plane graph~$G=(V,E)$, and
a set $L=\{l_1,\ldots,l_{\ell}\}$
of simple paths in $G$. We call $G$ the \emph{underlying network} and the
paths (metro) \emph{lines}.

For each edge $e = (u,v)\in E$, let $L_{e}$ be the set of lines
passing through~$e$. For $i \le j < k$, a \emph{block move} $(i,j,k)$
on the sequence $\pi=\left[\pi_1,\dots,\pi_n\right]$ of lines on $e$
is the exchange of two consecutive blocks
$\pi_i,\dots,\pi_j$ and $\pi_{j+1},\dots,\pi_k$. Interpreting
$e=(u,v)$ directed from $u$ to $v$, we are interested in
\emph{line orders} $\pi^0(e),\dots,\pi^{t(e)}(e)$ on $e$, so that
$\pi^0(e)$ is the order of lines $L_e$ at the beginning of $e$ (that
is, at vertex~$u$), $\pi^{t(e)}(e)$ is the order at the end of $e$
(that is, at vertex~$v$), and each $\pi^i(e)$ is an ordering of
$L_{e}$ so that $\pi^{i+1}(e)$ is
derived from $\pi^i(e)$ by a block move.
If $t+1$ line orders with these properties exist, we say
that there are $t$ \emph{block crossings} on edge~$e$.

For block crossings, the \emph{edge crossings} model is used,
in which crossings are not hidden under station symbols if possible.
Two lines sharing at least one common edge
either do not cross or cross each other on an edge but never in a
node.
Furthermore, we assume the \emph{path terminal property}, that is,
any line terminates at a leaf and no two lines
terminate at the same leaf; see Fig.~\ref{fig:metro-example} for an example.
The \emph{block crossing minimization} (BCM) asks for a drawing (given by
line orders for the edges) that minimizes the total number of block
crossings.

\begin{figure}[h]
	\centering
	\includegraphics{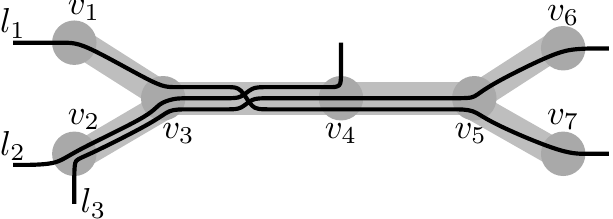}
	\caption{An example of a BCM instance with $5$ lines (taken
    from~\cite{DBLP:journals/jgaa/FinkPW15}).}
	\label{fig:metro-example}
\end{figure}

Note the similarity between block crossings and bundled crossings. If
we interpret the metro lines in a drawing as edges of a graph, then
every block crossing can be seen as a bundled crossing. However, block
crossings are much more restricted compared to bundled crossings:
First, lines are restricted to be routed on the metro network and,
second, block crossings on an edge have to follow the direction of the
edge.

Fink et al.~\cite{DBLP:journals/jgaa/FinkPW15} present the following
two results that we utilize to prove Theorem~\ref{thm:block-crossings-tree}:
\begin{itemize}
  \item Assume that $l_1$ and $l_2$ are two metro lines for which both
    pairs of terminals are adjacent in clockwise order of leaves in
    the given embedding. Then removing one of the lines
    from the instance does not change the minimum number of
    block crossings necessary for the instance.
    This observation corresponds to our simplification step.

  \item If $G$ is a tree, then there is an algorithm that creates a feasible
  line ordering with less than $2\ell$ block crossings. This result provides
  an upper bound for a simplified instance.
\end{itemize}

\section{Tight Examples for Circular Drawings}
\label{sec:bad-example-circular}
\begin{lemma}
	For every $k \ge 1$, there is a graph $G = (V,E)$ and an
	outerplanar subgraph $G^\star = (V,E)$ with $m^\star = m-k$ edges
  so that any circular vertex order $\pi$ prescribed by $G^\star$ has
	$\bccirc(G, \pi) \ge 2k - 1$.
  \label{lem:bad-example-circular-bound}
\end{lemma}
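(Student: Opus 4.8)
The plan is to construct, for each $k \ge 1$, a graph $G$ together with a spanning outerplanar subgraph $G^\star$ with exactly $k$ extra edges, such that the $k$ extra edges are forced to pairwise interact (be non-parallel and mutually crossing) in every circular layout consistent with the vertex order $\pi$ that $G^\star$ prescribes, and moreover each extra edge is crossed by many edges of $G^\star$ as well. First I would fix the vertex order: place the vertices on a circle and let $G^\star$ be an outerplane triangulation-like skeleton (e.g.\ a fan or a zig-zag path together with the boundary cycle) that is rigid enough that $\pi$ is essentially unique up to rotation/reflection. Then I would add the $k$ extra edges as ``long chords'' $e_1, \dots, e_k$ chosen so that their endpoints interleave pairwise; by the interleaving criterion used in the proof of Lemma~\ref{lemma:circ-fixed-alg}, every pair $e_i, e_j$ must cross in any embedding, and no $e_i$ is parallel to any other edge (the outerplanar skeleton sits in between), so the instance is already simplified on these edges.

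The key quantitative step is a lower bound argument in the spirit of Lemma~\ref{lemma:circ-fixed-lower-bound}, but applied locally to the $k$ extra edges rather than globally. I would follow one of the extra edges $e_i$ in the drawing and count the faces along it: since $e_i$ is not parallel to any neighbour, each side of $e_i$ meets a triangle or a $\ge 5$-gon, and I would arrange the construction so that $e_i$ is long and runs through many small faces of the $G^\star$-skeleton, forcing $\Omega(1)$ triangles per extra edge and hence forcing many triangular faces total. Feeding this into the Fink et al.\ bound that $f_3/4$ bundled crossings are required, I would aim to show $\bccirc(G,\pi) \ge 2k-1$. A cleaner alternative I would try first: show directly that the $k$ pairwise-crossing chords, together with the fixed skeleton, force at least $2k-1$ bundled crossings by an inductive peeling argument — remove the ``outermost'' extra edge, which in any bundled partition must be responsible for at least two bundled crossings (one on each end where it separates from the skeleton) except for the last one, giving the recurrence $B(k) \ge B(k-1) + 2$ with $B(1) \ge 1$.

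The main obstacle I anticipate is making the forcing argument airtight: a bundled crossing can absorb many pairwise crossings at once, so the naive ``$k$ chords give $\binom{k}{2}$ crossings'' does not directly yield a linear-in-$k$ lower bound — one genuinely needs the triangle/face counting (or the peeling argument) to certify that the crossings cannot all be merged into $o(k)$ bundles. I would therefore spend most of the effort designing the skeleton $G^\star$ so that each extra edge is \emph{geometrically isolated} from the others by crossing-free skeleton edges on both sides, which is exactly what prevents two extra edges from sharing the same bundle at both of their ``escape points''; this is where the constant $2$ (hence the bound $2k-1$ matching the upper bound of Lemma~\ref{lm:free-upper} up to the additive constant) comes from. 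The remaining verification — that $G^\star$ is outerplanar with respect to $\pi$, that it has exactly $m-k$ edges, and that no parallel pairs or uncrossed edges arise — is routine bookkeeping once the construction is pinned down.
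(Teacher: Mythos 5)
Your preferred ``peeling'' route is essentially the paper's own argument: the paper's construction is exactly an outerplanar skeleton that forces $\pi$ together with $k$ pairwise-crossing bold chords, each crossed by skeleton edges near both of its endpoints, where isolation by the skeleton ensures that end-crossings of distinct chords cannot lie in the same bundled crossing and that at most one chord can have its two end-crossings merged into one bundle, giving $2(k-1)+1 = 2k-1$. The only real difference is that the paper states this as a direct count over the $2k$ chord ends rather than as an induction $B(k)\ge B(k-1)+2$ (which would additionally require arguing that the two bundled crossings attributed to the removed chord contain no crossings among the remaining edges); your first, triangle-counting alternative is not used in the paper and would not yield the exact constant $2k-1$.
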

\begin{proof}
	Consider the example shown in Fig.~\ref{fig:bc-circular-example}.
	\begin{figure}[tb]
		\centering
		\includegraphics{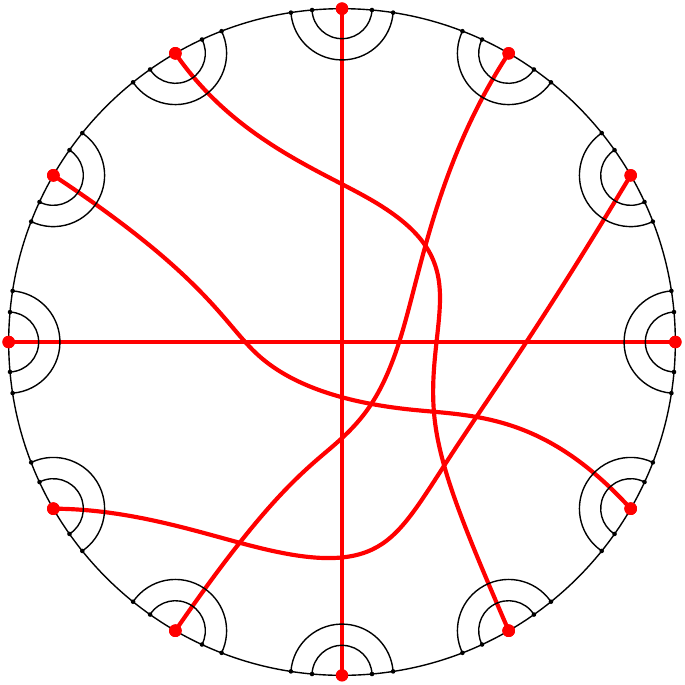}
		\caption{example}
		\label{fig:bc-circular-example}
	\end{figure}
	After removing the $k$ red bold edges, the black edges form a maximum
	outerplanar subgraph; up to rotation a crossing-free drawing of this
	subgraph results in the vertex order shown in the drawing.

	Each bold red edge crosses each other such edge. Furthermore, each
	bold red edge has crossings with black edges on either side. Two
	such crossings that don't share the red edge cannot be part of the
	same bundled crossing.  There can be at most one red edge whose
	black crossings on both ends are in the same bundled crossing; all
	other red edges have at least two bundled crossings. Hence, there
	must be at least $2k-1$ bundled crossings.
\end{proof}

\begin{lemma}
  For every $k \ge 1$, there is a graph $G = (V,E)$ with $\bccirc(G) = 1$
  and maximum outerplanar subgraph $G^\star = (V,E)$ with $m^\star$ edges
  so that $m-m^\star \ge k$.
  \label{lem:bad-example-circular}
\end{lemma}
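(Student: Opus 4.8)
The point of the lemma is that the upper bound $\bccirc(G,\pi)\le 2(m-m^\star)$ of Lemma~\ref{lm:free-upper} can be arbitrarily far from the truth: I want, for every $k$, a graph whose circular bundled crossing number is $1$ while a maximum outerplanar subgraph is missing at least $k$ edges. The plan is to give such a graph $G$ together with an explicit circular layout of it. I would build $G$ from $k$ \emph{edge-disjoint} copies $H_1,\dots,H_k$ of a small non-outerplanar gadget (a subdivided $K_4$ — i.e.\ a $4$-cycle together with its two diagonals, with the diagonals allowed to be subdivided paths) that share only a constant-size common part. Each $H_i$ contributes one designated chord $e_i$ and one designated chord $f_i$; in the layout these will be the only two ``long'' chords of $H_i$, all the other edges of $H_i$ being drawn as boundary-hugging curves. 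I would pick the circular order so that the endpoints of all the $e_i$ lie in two consecutive arcs $A_1,A_3$ of the circle, the endpoints of all the $f_i$ lie in the two complementary arcs $A_2,A_4$, the $e_i$ nest (are pairwise non-crossing) inside $A_1$ and $A_3$, the $f_i$ nest inside $A_2$ and $A_4$, and hence every $e_i$ crosses every $f_j$. Setting $E_1=\{e_1,\dots,e_k\}$, $E_2=\{f_1,\dots,f_k\}$, and routing all remaining edges without any crossing, every crossing of the drawing lies between $E_1$ and $E_2$, they all sit in a pseudodisk around the central ``$X$'', and no edge outside $E_1\cup E_2$ meets this pseudodisk; so the whole drawing is a single bundled crossing and $\bccirc(G)\le\bccirc(G,\pi)\le 1$.

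For the reverse inequality, $G$ contains $H_1$, which is non-outerplanar, so $G$ has no crossing-free circular layout; hence $\bccirc(G)\ge 1$ and therefore $\bccirc(G)=1$. Finally, since $H_1,\dots,H_k$ are edge-disjoint and each is non-outerplanar (each contains a $K_4$ or $K_{2,3}$ subdivision, so it survives every forbidden-minor test until an edge is removed from it), any outerplanar subgraph of $G$ must delete at least one edge of each $H_i$; thus a maximum outerplanar subgraph has at most $m-k$ edges, i.e.\ $m-m^\star\ge k$. One can moreover check $m-m^\star=k$: deleting one suitably chosen edge per $H_i$ leaves a $1$-sum of outerplanar pieces along the common part, which is again outerplanar.

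The main obstacle is the layout. A naive attempt that places the $k$ gadgets in disjoint arcs of the circle makes each gadget contribute its own local crossing and yields $k$ separate bundled crossings rather than one; conversely, if the chords $e_i,f_i$ are made genuinely long enough that each $e_i$ crosses each $f_j$, there is a real danger that the short gadget edges are forced to cross the bundle chords and destroy the single-bundle property. The resolution is to interleave the gadgets' vertices along $A_1$ and $A_3$ (and along $A_2$ and $A_4$) so that the $e_i$ form a nested family, and to route the short, subdivided edges of each $H_i$ as nested boundary-hugging curves inside the thin faces cut out between consecutive bundle chords and between the bundles and the spine; verifying that all of these compatibility conditions can be satisfied simultaneously is the delicate step, which I would carry out by an explicit construction illustrated in the accompanying figure. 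Once the layout is fixed, the three conclusions above follow immediately from the definition of a bundled crossing and from the characterisation of outerplanarity by the excluded minors $K_4$ and $K_{2,3}$.
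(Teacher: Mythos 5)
Your two side conditions are fine: $\bccirc(G)\ge 1$ does follow from $G$ containing a non-outerplanar subgraph, and the packing argument (any outerplanar subgraph must miss an edge of each of $k$ edge-disjoint non-outerplanar gadgets) correctly gives $m-m^\star\ge k$ \emph{provided} the construction exists. The gap is exactly the step you defer to ``the accompanying figure'': the layout you describe cannot exist for $k\ge 2$. Since there is only one bundled crossing, condition~(i) of the definition forces every $e_i\in E_1$ to cross every $f_j\in E_2$, and in a simple circular drawing two vertex-disjoint chords cross if and only if their endpoints interleave on the circle. So for all $i\ne j$ the endpoints of $f_j$ are separated on the circle by the endpoints of $e_i$. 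But $H_j$ minus its two designated chords still contains a path joining the two endpoints of $f_j$ (the remaining $4$-cycle/branches of the subdivided $K_4$), and because your gadgets are essentially vertex-disjoint this path avoids both endpoints of $e_i$; a path between two circle points separated by a chord and avoiding that chord's endpoints must contain an edge whose endpoints interleave with the chord, hence must cross $e_i$. That is a crossing involving an edge outside $E_1\cup E_2$, which destroys the single-bundled-crossing property (it even violates condition~(ii), since no such edge may meet the pseudodisk). Subdividing does not help: the forced crossing depends only on the cyclic positions of the endpoints, not on interior subdivision vertices. Concretely, in your nested arrangement only the ``innermost'' gadget can have its boundary-hugging cycle avoid the other gadgets' chords, and it would have to be innermost at all four corners simultaneously, which the nesting makes impossible; already $k=2$ fails. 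So the decomposition into $k$ essentially disjoint non-outerplanar gadgets, each owning one chord per bundle with all its other edges crossing-free, is not just delicate but contradictory.

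The paper's construction sidesteps this by \emph{not} packing disjoint non-outerplanar gadgets. Every edge that has to cross anything is itself a bundle edge: there are two groups of $k$ pairwise-parallel red edges, one group per bundle, so all crossings are red--red and form the single bundled crossing, while the black edges form an outerplanar graph drawn without crossings in the chosen order. The bound $m-m^\star\ge k$ is then argued globally rather than by packing: the black graph contains cycles through the red endvertices that fix their circular order, so to make $G$ outerplanar one must either delete an entire red group ($k$ edges) or delete at least $k$ black edges to destroy all of these order-forcing cycles. If you want to salvage your approach, you would have to let each gadget place several (in fact, essentially all) of its crossing edges inside the bundles, or let the gadgets share the chord endpoints so that the connecting paths can pass through them --- at which point you are reconstructing something close to the paper's example rather than the disjoint-gadget picture you describe.
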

\begin{proof}
  Figure~\ref{fig:gen-edge-removal-circular} shows a sketch of the
  construction of $G$.
  \begin{figure}[tb]
    \centering
    \includegraphics{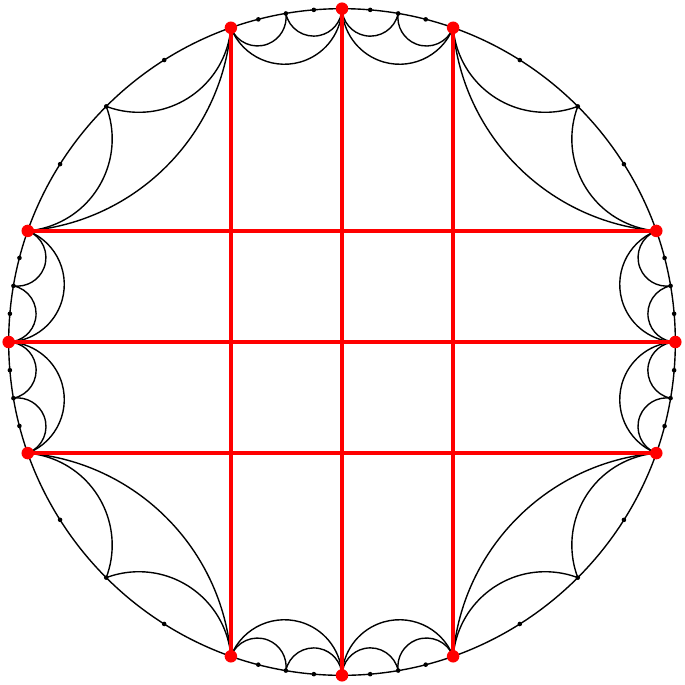}
    \caption{A graph that can be planarized by removing no less than
    $k=3$ edges, yet with genus and bundled crossing number $1$.}
    \label{fig:gen-edge-removal-circular}
  \end{figure}
  With the given circular order of vertices at least one of the two
  groups of read bold edges ($k$ edges each) must disappear to remove the
  crossings. On the other hand, even with only the black edges, at
  least $k$ edges must be removed to destroy all cycles between red
  vertices; as long as there is such a cycle, it fixes the circular
  order of these vertices. Hence, at least $k$ edges must be removed
  to make the graph outerplanar. On the other hand, the drawing
  clearly shows that a single bundled crossing is sufficient.
\end{proof}

\section{Tight Example for General Drawings}
\label{sec:bad-example-general}
\begin{lemma}
  For every $k \ge 1$, there is a graph $G = (V,E)$ with $\bc(G) = 1$
  and maximum planar subgraph $G^\star = (V,E)$ with $m^\star$ edges
  so that $m-m^\star \ge k$.
  \label{lem:bad-example-general}
\end{lemma}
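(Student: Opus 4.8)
The plan is to exhibit, for each $k \ge 1$, a concrete graph $G$ that admits a drawing with a single bundled crossing, yet whose every planar subgraph omits at least $k$ edges of $G$. The natural construction mirrors the circular example of Lemma~\ref{lem:bad-example-circular} but drops the restriction to a circular layout. Concretely, I would take two ``grids'' or ``bundles'' of $k$ parallel edges each, routed so that the first bundle crosses the second bundle in exactly one bundled crossing (two families of $k$ pairwise non-crossing edges crossing each other, all $k^2$ crossings enclosed in one pseudodisk), and then attach enough planar ``scaffolding'' (cycles through the endpoints of the bundles) so that destroying all of these $k^2$ crossings forces the removal of at least $k$ edges --- one entire side of the crossing --- while no cheaper planarization is available because the scaffolding cycles rigidly pin the relevant vertices.

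The key steps, in order, are: (1) describe $G$ explicitly --- place the endpoints of the two bundles and the scaffolding cycles, and give the drawing with one bundled crossing, so that $\bc(G) = 1$ is immediate (it is $\ge 1$ since the graph is nonplanar, and $\le 1$ by the exhibited drawing). (2) Argue the lower bound $m - m^\star \ge k$ on any planar subgraph: show that the scaffolding alone is $2$-connected (or otherwise rigid) so that in any planar embedding the two bundles are forced to lie on opposite sides of a separating cycle and hence must cross; therefore at least one edge from each crossing pair must be deleted, and since the two bundles are ``fully interleaved'' (every edge of one crosses every edge of the other in every planar embedding respecting the scaffolding), a standard argument --- e.g.\ a vertex-cover / König-type argument on the $K_{k,k}$ crossing pattern, or simply the observation that one uncrossed family still needs the other family entirely removed --- yields that $\ge k$ edges must go. (3) Conclude $G^\star$ with $m^\star \le m - k$ is the best possible, establishing the lemma.

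The main obstacle I anticipate is step~(2): proving that \emph{no} planar subgraph does better than deleting one whole bundle. Deleting $k$ edges spread across \emph{both} bundles could in principle kill all $k^2$ crossings if the crossing pattern were sparse, so the construction must guarantee that the two families of $k$ edges are mutually ``crossing-complete'' and, crucially, that this remains true in \emph{every} planar embedding forced by the scaffolding --- i.e.\ that one cannot reroute to avoid crossings without deleting an edge. This is where the rigidity of the scaffolding (3-connectivity of the relevant subgraph, or an explicit nested-cycle structure that fixes the rotation system) must be invoked carefully; I would likely phrase it via: the scaffolding contains a cycle $C$ separating the two endpoint-sets of the bundles, $C$ together with the bundles forms a subdivision of $K_{2,k}$-like or $K_{3,3}$-like obstruction whenever two bundle-edges survive on opposite sides, and counting then forces $\ge k$ deletions. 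The remaining parts --- that $\bc(G)=1$ and the accounting $m - m^\star \ge k$ given the crossing structure --- are routine once the figure is fixed.
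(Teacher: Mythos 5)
Your plan is the same family of construction that the paper sketches, with one variation. The paper starts from a planar triangulation, picks two vertices far apart, \emph{thickens} the triangulation around them, and adds $k$ parallel edges between their neighbourhoods; the second bundle of the single bundled crossing is then a wall of triangulation edges, and the lower bound is argued by cases (delete the $k$ added edges, delete a separating wall, or cut a red vertex loose, each of which costs at least $k$ edges). You instead add \emph{two} bundles of $k$ edges crossing each other and close the count with the vertex-cover/K\H{o}nig observation on the $K_{k,k}$ crossing pattern. That observation is correct, but it is only needed because you place $k$ deletable edges on both sides of the crossing; the paper sidesteps it by making one side part of the scaffold, which is already expensive to delete. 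Both proofs are at the level of a sketch (the paper's appendix argument is itself only a sketch around a figure), so your proposal is acceptable in spirit.

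The one place where your sketch, as stated, would not go through is the rigidity claim in step (2). You cannot restrict attention to deletions inside the bundles: a planar subgraph may instead delete up to $k-1$ \emph{scaffolding} edges, and ``the scaffolding is $2$-connected'' (or even $3$-connected) is far too weak to survive that. If the separating structure around the crossing region is a single cycle, deleting one scaffold edge opens a corridor through which \emph{all} $k$ edges of one bundle can be rerouted crossing-free, planarizing the graph at cost $1$. What is needed is quantitative, $k$-fold redundancy: every way of making an endpoint pair of a bundle edge cofacial with a route avoiding the other bundle must cost at least $k$ scaffold edges (equivalently, the relevant dual distances or the number of edge-disjoint nested separating cycles must be at least $k$), so that after any $k-1$ deletions a surviving edge of each bundle still completes a Kuratowski obstruction. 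You do mention a nested-cycle structure as an option, so this is a matter of actually building it and carrying the bookkeeping through rather than a wrong idea --- it is precisely the role played by ``make the triangulation stronger around the red vertices'' in the paper's construction --- but the $2$-connectivity phrasing should be dropped, since it is not sufficient.
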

\begin{proof}
  Figure~\ref{fig:gen-edge-removal-gen} shows a sketch of the
  construction of $G$.
  \begin{figure}[tb]
    \centering
    \includegraphics{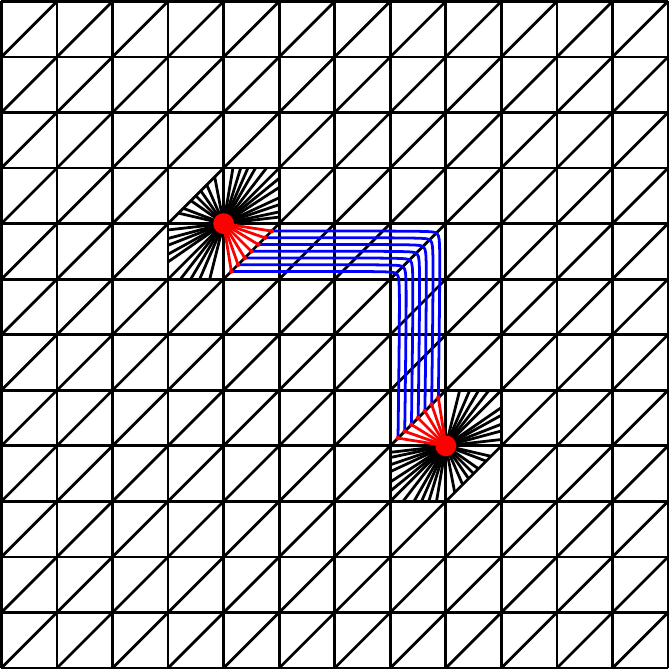}
    \caption{A graph that can be planarized by removing no more than
    $k$ edges, yet with genus and bundled crossing number $1$.}
    \label{fig:gen-edge-removal-gen}
  \end{figure}
  The idea is to start with a triangulated graph, choose two (red)
  vertices far apart, make the triangulation stronger around them and
  then add $k$ ``parallel'' edges between neighbors of the red vertices.
  The graph can only be made planar by removing the $k$ edges, removing
  a ``path'' of edges separating the red vertices, or ``cutting'' one of
  the red vertices ``loose''. All of these involve removing a lot of
  edges. In contrast to that, as the parallel routing suggests, a single
  bundled crossing suffices for all $k$ edges.
\end{proof}

\end{document}